\documentclass{llncs}
\pdfpagewidth=8.5in 
\pdfpageheight=11in 
\usepackage{amsmath}
\usepackage{amsfonts}
\usepackage{amssymb}
\usepackage{booktabs}
\usepackage{xspace} 
\usepackage{tikz} 
\usepackage{times}
\usepackage{multirow}
\usepackage{microtype}
\usepackage{changebar}
\usepackage{pgfplots}
\usepackage{hyperref}

\clubpenalty = 10000
\widowpenalty = 10000
\displaywidowpenalty = 10000
\usepackage{flushend}

\newcommand\iw{$i$-witness}
\newcommand\icw{$i$-cowit}
\newcommand\cu{\ensuremath{C_\_}}
\newcommand\even{\mathsf{even}}
\newcommand\odd{\mathsf{odd}}
\newcommand\pos{\mathsf{positions}}
\newcommand\unblk{\mathsf{unblocked}}
\newcommand\ubp{\mathsf{ubp}}
\newcommand\evenodd{\mathsf{evenodd}}
\newcommand\val{\mathsf{value}}
\newcommand\up{\mathsf{up}}
\newcommand\ru{\mathsf{ru}}
\newcommand\au{\mathsf{au}}
\newcommand\won{\mathsf{won}}

\newcommand\cnt{\mathsf{count}}

\begin{document}
\title{Smaller Progress Measures and Separating Automata for Parity Games}

\author{Daniele Dell'Erba, Sven Schewe}

\institute{University of Liverpool}

\sloppy

\newcommand \Aa {\mathcal{P}}
\newcommand{\seq}[1]{\langle #1 \rangle}

\newcommand{\Plays}{\mathsf{Plays}}

\maketitle

\begin{abstract}
Calude et al.\ have recently shown that parity games can be solved in quasi-polynomial time, a landmark result that has led to a number of approaches with quasi-polynomial complexity.
Jurdinski and Lasic have further improved the precise complexity of parity games, especially when the number of priorities is low (logarithmic in the number of positions).
Both of these algorithms belong to a class of game solving techniques now often called separating automata:
deterministic automata that can be used as witness automata to decide the winner in parity games up to a given number of states and colours.
We suggest a number of adjustments to the approach of Calude et al.~\cite{CJKLS17} that lead to smaller statespaces.
These include and improve over those earlier introduced by Fearnley et al.~\cite{FJKSSW19}.

We identify two of them that, together, lead to a statespace of exactly the same size Jurdzinski and Lasic's concise progress measures \cite{JL17}, which currently hold the crown as smallest statespace.
The remaining improvements, hence, lead to a further reduction in the size of the statespace, making our approach the most succinct progress measures available for parity games.
\end{abstract}

\section{Introduction}

Parity games are two-player perfect information turn-based zero-sum games of 
infinite duration played on finite directed graphs. Each vertex, that is 
labelled with an integer \emph{colour}, is assigned to one of the two players, 
\emph{even} and \emph{odd}, here referred to as \emph{he} and \emph{she}, respectively.
A play consists in an infinite sequence of player's moves around the graph, 
and the winner is determined by the \emph{parity} of the largest colour 
encountered along the play. Hence, player even (he) wins if it is an even colour, 
and player odd (she) wins otherwise.

Parity games have been extensively studied for their practical applications, 
to determine their complexity status, and to find efficient solutions.

From a practical point of view, many problems in formal verification and 
synthesis can be reformulated in terms of solving parity games. Computing 
winning strategies for these games is linear-time equivalent to solving the 
modal $\mu$-calculus model checking problem~\cite{EJS01,EL86}. They can be applied to solve the 
complementation problem for alternating automata~\cite{GTW02} or the emptiness of the 
corresponding nondeterministic tree automata~\cite{KV98}. These automata, in turn, can be 
used to solve the satisfiability and model checking problems for several 
expressive logics~\cite{CHP10,MMV10a,MMPV12,BMM13,BG04}, such as $\mu$-calculus~\cite{Wil01,SF06} and ATL*~\cite{AHK02,Sch08}.

On the complexity theoretic side, determining the winner of a parity game is 
a problem that lies in NP~$\cap$~co-NP~\cite{EJS01}, being memoryless determined~\cite{Mos91,EJ91,Mar75}, but 
it has been even proved to belong to UP~$\cap$~co-UP~\cite{Jur98}, 
and later to be solvable in quasi-polynomial time~\cite{CJKLS17}. However, determining their 
exact complexity is still an open problem.

The existing algorithms for solving parity game can be divided into two classes.
The first one collects approaches that solve the game by creating a winning 
strategy for one of the two players on the entire game. This can be done either 
employing a value iteration over progress measures~\cite{Jur00} or iteratively improving the 
current strategy~\cite{VJ00,Fea10a,Fri13}. To the second class, instead, belong approaches that 
decompose the solution of a game into the analysis of its subgames in a 
divide-et-empera concept. To do so, these approaches partition the game into a set of positions that satisfy the required properties. The name of the sets depend on the properties: attraction set~\cite{Zie98,Sch07}, region~\cite{BDM18,BDM18b,BDM16b}, tangle~\cite{Dij18a}, and justification~\cite{LBD20}.

Many algorithms from both the classes have been refined to achieve a 
quasi-polynomial upper bound since the contribution of Calude et al.\
\cite{CJKLS17}. This seminal paper works as a value iteration algorithm 
with compact measures for which a poly-logarithmic size witness is sufficient, 
rather than storing the entire history of the play. The same approach has been refined improving the complexity result~\cite{FJSSW17,JL17}, while the same complexity has been achieved by a number of different approaches such as the register-index algorithm~\cite{Leh18} and the bounded version of the recursive algorithm~\cite{Par18}. Interestingly, all the known quasi-polynomial algorithms have been proved to be derived by the separation 
approach that also provide a lower bound for these techniques~\cite{CDFJLP18}.

\subsection{Contribution}
We adjust the definitions of \emph{witnesses}, the data structure first used by Calude et al.~\cite{CJKLS17} and the way they are updated in a number of ways.

The most clear-cut improvement is the increased succinctness of the resulting structures: integrating them with the improvements suggested in Fearnley et al.~\cite{FJKSSW19}, we achieve a number of improvements with different power to improve the succinctness of the data structure.
The most powerful of these improvements is the restriction of the occurrences of odd colours within the witnesses to at most one.
Together with the small improvement from \cite{FJKSSW19} that, when the maximal colour is odd, we can just reset the witness to its initial value instead of recording this value, we obtain a statespace of quite different strucuture to, but the same size as, the currently smallest statespace from Jurdzinski and Lasic \cite{JL17}.

On top of this improvement, we offer small additional improvements, incl.\ the improvement from \cite{FJKSSW19} that the odd colours can be skipped for the least significant position of a witness, and the new improvements that make a more careful use bounds on the length of `even chains' (usually the number of positions with even colour) and the exclusion of the least colour when this colour is odd.
Depending on the bound for the length of an even chain, this translates to a further improvement between a factor from just under two to just under four.

The second improvement is a re-definition of the semantics of witnesses, moving from the classic witnesses to \emph{colour witnesses}, where all positions with the came colour in a witness refer to one chain, instead of referring to many.

This accelerated convergence, though the acceleration is muted where it is used for value iteration.

\subsection{Outline}
We discuss a variation of the algorithm of Calude et al.~\cite{CJKLS17}, partly in the original version and partly in the variation suggested by Fearnley et al.~\cite{FJKSSW19}, to extend this approach to value iteration.

After the general preliminaries, we therefore recap this approach in Section \ref{sec:classic}, using a mild variation of the witness from \cite{CJKLS17} for a basic update rule $\up'$ that updates a witness $\mathbf b$ when reading a state with colour $v$ to a witness $\mathbf b' = \up'(\mathbf b,v)$, and an antagonistic update rule $\au'$ that updates a witness $\mathbf b$ when reading a state with colour $v$ to the witness $\min_{\mathbf c {\sqsupseteq'} b} \up'(\mathbf c,v)$.

We then amend those rules in two steps.
The first step (Section \ref{sec:concise}) reduces the statespace, but otherwise retains the classic lines of \cite{FJKSSW19}.
It is a simple extension that carries the easiest to spot (and sell) improvement of this work: the reductions from the statespace of the witnesses used, leading to more concise witnesses.

The backbone of the statespace reduction is to simply restricts the number of times an odd colour occurs in a witness to at most once.
Where the maximal colour is even, this change alone leads to a perfect match in size with the statespace of \cite{JL17}, which is currently the smallest.
This perfect match is a bit surprising, as the structure of the statespace is very different.

The remaining changes extend this to the case where the maximal colour is odd and collects some further minor reductions that roughly lead to a spatespace reduction that is usually in a range between $2$ and $4$, where the advantage is strongest when the number of states with an even number of colours is a power of $2$.

Subsequently, we re-interpret the semantics of a witness in Section \ref{sec:colour}.
This change in semantics does not change the statespace, but it allows for updating the witnesses faster.
Faster updating is mainly improving the basic update rule, but to some extend also the antagonistic update rule, leading to faster convergence in both cases.

We then turn to an estimation of the new statespace in Section \ref{sec:complex}.
For this, we proceed in a number of steps.
We first look at the two classic statespaces, considering the previously most concise one those of Calude et al.'s original QP algorithm \cite{CJKLS17}.

We then turn to using those improvements to \cite{CJKLS17} that lead to a statespace of size equal to that of \cite{JL17}.
This is done in order to be able to show that the two statespaces are of precisely the same size, but also to have a clear understanding which improvements remain beyond this, and to focus on how they influence the statespace.

We then exemplify how the three statespaces compare in size in Section \ref{sec:compare}.

\section{Preliminaries}
\label{sec:prelim}
Parity games are turn-based zero-sum games played
between two players---even and odd, referred to as he and she, respectively---over finite graphs.
A parity game $\Aa$ is a tuple $(V_e, V_o, E, C, \phi)$, where
\begin{itemize}
    \item $(V = V_e \cup V_o, E)$ is a finite directed graph, where the set $V$ of vertices is partitioned into a set $V_e$ of vertices controlled by player \emph{even}
    and a set $V_o$ of vertices controlled by player \emph{odd}, and where $E \subseteq V \times V$ is the set of edges;
    \item $C\subseteq \mathbb N = \{1,2,3, \ldots\}$ is a finite consecutive set of colours, such that $C = \{1,2, \ldots, \max\{C\}$ or $C = \{2,3, \ldots, \max\{C\}$ holds; and
    \item $\phi: V \to C$ is the colouring functions that maps each vertex to a colour.
\end{itemize}

We define $C^- = C \smallsetminus \big\{\max\{C\}\big\}$ if the highest colour $\max\{C\}$ is odd, and $C^- = C$ if the highest  colour $\max\{C\}$ is even, and require that every vertex has at least one outgoing edge. 

Intuitively, a parity game $\Aa$ is played between the two players
by moving a token along the edges of the directed graph $(V,E)$. 
A play of such a game starts at some initial vertex 
$v_0 \in V$ where the token is placed at the beginning.
The player controlling this vertex then chooses a successor vertex
$v_1$ such that $(v_0, v_1) \in E$, and the token is moved to this successor vertex. 
In the next turn the player controlling the vertex $v_1$ make his choice by picking a successor
vertex $v_2$ where to move the token, such that $(v_1, v_2) \in E$, and so on.
In this manner both players move the token over the arena and thus form an infinite play of the game.

Formally, a play of a game $\Aa$ is an infinite sequence of vertices
$\seq{v_0, v_1, \ldots} \in V^\omega$ such that, for all $i \geq 0$, we have that
$(v_i, v_{i+1}) \in E$.  
We denote as $\Plays_\Aa(v)$ the set of plays of the game $\Aa$ that origins in a vertex
$v \in V$ and as $\Plays_\Aa$ the set of all plays of the game. 
We omit the subscript when the arena is clear from the context. 
The colour mapping $\phi: V \to C$ can be extended from vertices to plays by defining
the mapping $\phi: \Plays \to C^\omega$ as  
$\seq{v_0, v_1, \ldots} \mapsto \seq{\phi(v_0), \phi(v_1), \ldots}$.

A play $\seq{v_0, v_1, \ldots}$ is won by player \emph{even} if $\limsup_{i \rightarrow \infty}\phi(v_i)$ is even, and by player \emph{odd} otherwise.

A \emph{prefix} of a play (or play prefix) is a non-empty initial sequence $\seq{v_0, v_1, \ldots, v_m}$ of a play $\seq{v_0, v_1, \ldots}$.

For a play $\rho=\seq{v_0, v_1, \ldots}$ or play prefix $\rho=\seq{v_0, v_1, \ldots,v_n}$, an \emph{even chain} of length $\ell$ is a sequence of positions $p_1 < p_2 <p_3 <
\ldots < p_\ell$ (with $0\leq p_1$ and, for plays prefixes, $p_\ell \leq n$) in $\rho$ that has the
following properties:
\begin{itemize}
\item for all $j \in \{1, \ldots, \ell\}$, we have that $\phi(v_{p_j})$ is even, and
\item for all $j \in \{1, \ldots, \ell-1\}$ the colours in the subsequence defined by $p_{j}$ and
$p_{j+1}$ are less than or equal to  $\phi(p_j)$ or $\phi(p_{j+1})$.
More
formally, we have that all colours
$\phi(v_{p_j}),\phi(v_{(p_j)+1}),\ldots,\phi(v_{p_{(j+1)}})$ are less than or
equal to $\max\big\{\phi(v_{p_j}),\phi(v_{p_{j+1}})\big\}$.
\end{itemize}

A strategy for player \emph{even} is a function $\sigma: V^*V_e \rightarrow V$ such that $\big(v,\sigma(\rho,v)\big)\in E$ for all $\rho \in V^*$ and $v \in V_e$.
If a strategy $\sigma$ only depends on the last state, then is called memoryless ($\sigma(\rho,v) = \sigma(\rho',v)$ for all $\rho,\rho' \in V^*$ and $v \in V_e$).
A play $\seq{v_0, v_1, \ldots}$ is consistent with $\sigma$ if, for every initial sequence $\rho_n = v_0, v_1, \ldots,v_n$ of the play that ends in a state of player \emph{even} ($v_n \in V_e$), $\sigma(\rho_n)=v_{n+1}$ holds.
Player \emph{even} wins the game starting at $v_0$ if he has a strategy $\sigma$ such that either all plays $\seq{v_0, v_1, \ldots}$ consistent with $\sigma$ satisfy $\limsup_{i \rightarrow \infty}\phi(v_i)$ (i.e.\ the highest colour that occurs infinitely often in the play) is even or all plays $\seq{v_0, v_1, \ldots}$ consistent with $\sigma$ contain a loop $v_i, v_{i+1}, \ldots, v_{i+k}$, that satisfies $v_i = v_{i+k}$ and that $\max\{\phi(v_i), \ldots, \phi(v_{i+j})$ is even. In both cases $\sigma$ might be memoryless. We use different criteria in the technical part, choosing the one that is most convenient.

A \emph{separating automaton} \cite{automataToolbox} for parity games with a set of colours $C$ and a bounded number of states, or a bounded number of states with even colour,
is a deterministic reachability automaton $\mathcal A = (Q , C , q_0 , \delta, \won)$, where
\begin{itemize}
    \item $Q$ is the set of states, with $q_0,\won \in Q$, $q_0$ is the initial state and $\won$ is the target state (and sink), and
    \item $\delta: Q \times C \rightarrow Q$ is the transition function (with $\delta(\won,v)=\won$ for all $v\in C$),
\end{itemize}
such that, for all parity games with colours $C$ and that have no more states (of even colour) than the given bound, there are
\begin{itemize}
    \item if $q \in V$ is a winning state, then there is a positional strategy $\sigma$ for player even such that the run of $\mathcal A$ on all plays consistent $\sigma$ are accepted by $\mathcal A$, i.e.\ reach the target state $\won$; and
    \item if $q \in V$ is a winning state, then there is a positional strategy $\sigma$ for player odd such that the run of $\mathcal A$ on all plays in $\Plays(q)$
    consistent $\sigma$ are accepted by $\mathcal A$, i.e.\ does not reach the target state $\won$.
\end{itemize}

\section{Classic Witnesses}
\label{sec:classic}
We adjust the approach from \cite{CJKLS17} and \cite{FJKSSW19}, and this section is predominantly taking the representation from \cite{FJKSSW19}.
It does, however, change some details in the definitions of $i$-witnesses that end in an odd priority and the definition of the value of a witness slightly to suite the rest of the paper better.
Where the proofs are affected, they are adjusted and given, but the proofs are mostly unaffected by these minor details.

\subsection{Classic Forward Witness}
We start with describing the \emph{old} witness without making its semantics formal (as we do not need it in this paper), and will turn to the new \emph{concise witness} (Section \ref{sec:concise}) and the \emph{colour witness} (Section \ref{sec:colour}) afterwards.

\paragraph{\bf $i$-Witnesses}
Let $\rho = v_1, v_2, \dots, v_m$ be a prefix of 
a play of the parity game. An \emph{even \iw} is a sequence of (not necessarily consecutive) positions of $\rho$
\begin{equation*}
p_1, p_2, p_3, \dots, p_{2^i}, 
\end{equation*}
of length exactly $2^i$,
and an \emph{odd \iw}
is a sequence of (not necessarily consecutive) positions
of $\rho$ 
\begin{equation*}
p_0, p_1, p_2,\dots, p_{2^i}
\end{equation*}
of length exactly $2^i+1$,
that satisfy the following properties:
\begin{itemize}
\item \textbf{Position:} each $p_j$ specifies a position in the play $\rho$,
so each $p_j$ is an integer that satisfies $1 \le p_j \le m$.
\item \textbf{Order:} the positions are ordered. So we have $p_j < p_{j+1}$ for
all $j < 2^i$.

\item {\bf Evenness:} 
all positions but the final one are even. Formally, for all $j < 2^i$ the
colour $\phi(v_{p_{j}})$ of the vertex in position $p_{j}$ is even.

For position $p_{2^i}$, its colour $\phi(v_{p_{2^i}})$ is even for an even \iw, and odd for an odd \iw.

Note that this entails that an \iw\ contains an even chain of length $2^i$.

\item {\bf Inner domination:}  
the colour of every vertex between $p_j$ and $p_{j+1}$ is dominated by the colour
of $p_j$ \emph{or} the colour of $p_{j+1}$.
Formally, 
for all $j < 2^i$, the colour of every vertex in the subsequence
$v_{p_{j}},v_{(p_{j})+1},\ldots,v_{p_{(j+1)}}$ is less than or equal to 
$\max\big\{\phi(v_{p_{j}}),\phi(v_{p_{j+1}})\big\}$.

\item {\bf Outer domination:} 
the colour of $p_{2^i}$ is greater than or equal to the colour of every vertex
that appears after 
$p_{2^i}$ in $\rho$.
Formally, for all $k$ in the range $p_{2^i} < k \le m$, we
have that $\phi(v_{k}) \le \phi(v_{p_{2^i}})$.
\end{itemize}

\paragraph{\bf Witnesses}
We define $\cu = C^- \cup \{ \_ \}$ as the set of colours plus with the $\_$
symbol.
A \emph{witness} is a sequence\footnote{While $k$ can be viewed as "big enough" or as "of arbitrary size" for the definition, we will later see that a length $k+1$, with $k = \lfloor \log_2(e)\rfloor$, where $e$ is the number of vertices with an even colour, or any other sufficient criterion for the maximal length of an even chain, is sufficient.} 
\begin{equation*}
b_k, b_{k-1}, \dots, b_1, b_0,
\end{equation*}
such that each element $b_i \in \cu$, and
that satisfies the following properties:
\begin{itemize}
\item \textbf{Witnessing:} there exists a family of $i$-witnesses, one for each element $b_i$ with $b_i \ne \_$.
We refer to such an $i$-witness in the run $\rho$. We will refer to this witness as
\begin{equation*}
p_{i, 1}, \; p_{i, 2}, \; \dots, \; p_{i, 2^i}
\end{equation*}
for even witnesses and
\begin{equation*}
p_{i, 0}, \; p_{i, 1}, \; \dots, \; p_{i, 2^i}
\end{equation*}
for odd witnesses.
\item \textbf{Dominating colour:}
For each $b_i \ne \_$, we have that $b_i = \phi(v_{p_{i, 2^i}})$. That is,
$b_i$ is the outer domination colour of the $i$-witness.
\item \textbf{Ordered sequences:} The $i$-witness associated with $b_i$ starts after a
$j$-witness associated with $b_{j}$ whenever $i < j$. Formally, for all $i$
and $j$ with $i < j$, if $b_i
\ne \_$ and $b_{j} \ne \_$, then $p_{j, 2^j} < p_{i, 1}$ when the \iw\ is even, and $p_{j, 2^j} < p_{i, 0}$ otherwise.
\end{itemize}

For a little bit of extra conciseness, we also require that $b_0$ is either even or $\_$.

Note that the witness does not store the $i$-witnesses associated with each position~$b_i$.
However, the sequence is a witness only if the corresponding $i$-witnesses \emph{exist}.
Moreover, the colours in a witness are monotonically increasing for growing indices (and thus increase from right to left), since each colour $b_j$ (weakly) dominates all colours that appear
afterwards in~$\rho$ as a consequence of the dominating colour property and the ordered sequences property.

\paragraph{\bf Forward and backward witnesses.}
The \emph{forward} witnesses described so far were introduced in~\cite{CJKLS17}, while we now describe the \emph{backward} witnesses and an ordering over them that have been introduced in~\cite{FJKSSW19}.
For each play prefix $\rho = v_1, v_2,
\dots, v_m$, we define a reverse play $\overleftarrow{\rho} = v_m, v_{m-1},
\dots, v_1$; a backward witness is a witness for $\overleftarrow{\rho}$, or for a prefix of it.

\paragraph{\bf Order on witnesses.}
The order $\succeq$ we have mentioned earlier orders the set $\cu$ such that even numbers are better than odd numbers, higher even numbers are better than smaller even numbers, smaller odd numbers are better than
higher odd numbers, and every number is better than $\_$.
Formally, $a \succeq b$ if $b = \_\,$; or $a$ is even and $b$ is either odd or $a \geq b$; or $a \leq b$ and they are both odd.

Using $\succeq$, we define an order ${\sqsupseteq'}$ over witnesses that compares two witnesses lexicographically, where the most significant element is $b_k$ and the least significant element is $b_0$. Each element is compared using the order $\succeq$.
The biggest witness has a special value $\won$; i.e., $\won \sqsupseteq' \mathbf b$ holds for all witnesses $\mathbf b$.

\paragraph{\bf The value of a witness.}

For each witness $\mathbf b = b_k, b_{k-1}, \ldots, b_0$, we define the following functions:
\begin{itemize}
\item \textbf{Even positions:} $\even(\mathbf b) = \{i \in \mathbb N_0 \mid b_i \mbox{ is an even number} \}$,
\item  \textbf{Relevant \iw es:} $\evenodd(\mathbf b) = \even(\mathbf b)$ if $\mathbf b$ does not contain an odd number,

otherwise, $\evenodd(\mathbf b) = \{i \in \even(\mathbf b) \mid i > o\} \cup \{o\}$,

with $o = \max\{i \in \mathbb N \mid b_i$ is odd$\}$, and 

\item  \textbf{Value of witness:} $\val(\mathbf b) = \sum\limits_{i\in \evenodd(\mathbf b)} 2^i$.
\end{itemize}

\noindent\textbf{Remark.}
The value function from \cite{FJKSSW19} is different in that is uses $\sum\limits_{i\in \even(\mathbf b)} 2^i$.
We will discuss the impact that this difference has on the statespace at the end of Section \ref{sec:complex}.
\medskip

We can show that the value of $\mathbf{b}$ corresponds to the
length of an even chain in $\rho$ that is witnessed by $\mathbf{b}$.

\begin{lemma}\cite{FJKSSW19}
If\ $\mathbf{b}$ is a (forward or backward) witness of $\rho$, then there 
is an even chain of length $\val(\mathbf b)$ in $\rho$.
\end{lemma}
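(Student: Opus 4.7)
My plan is to construct the required even chain explicitly by concatenating, for each $i \in \evenodd(\mathbf{b})$, a prefix of the corresponding $i$-witness, taken in the order these witnesses occur in $\rho$. By the Ordered Sequences property this amounts to listing them from the largest to the smallest $i \in \evenodd(\mathbf{b})$; each piece contributes $2^i$ positions of even colour, so the concatenation has length $\sum_{i \in \evenodd(\mathbf{b})} 2^i = \val(\mathbf{b})$, and the remaining work is to check that the resulting sequence really is an even chain of $\rho$.

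I would first dispatch the case in which $\mathbf{b}$ contains no odd entry, so that $\evenodd(\mathbf{b}) = \even(\mathbf{b})$ and every used $i$-witness is even, contributing its full sequence $p_{i,1}, \ldots, p_{i,2^i}$ of $2^i$ positions of even colour. Evenness of the chain is immediate from the $i$-witnesses' Evenness. For the chain's domination condition on a consecutive pair $(q_j, q_{j+1})$ two subcases arise: if both positions lie in the same $i$-witness, that $i$-witness's own Inner Domination gives the bound; if $q_j = p_{i, 2^i}$ is the last position of one $i$-witness and $q_{j+1}$ is the first position of the next-used $i'$-witness with $i' < i$, then Outer Domination of the $i$-witness bounds every colour strictly between $q_j$ and $q_{j+1}$ by $b_i = \phi(v_{q_j})$, which is itself at most $\max\{\phi(v_{q_j}), \phi(v_{q_{j+1}})\}$.

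For the remaining case, let $o = \max\{i : b_i \text{ is odd}\}$. Monotonicity of $\mathbf{b}$'s entries in $\succeq$ forces each $b_i$ with $i < o$ to be $\_$ or odd, so no even index lies below $o$ and $\evenodd(\mathbf{b}) = \even(\mathbf{b}) \cup \{o\}$; the $o$-witness is therefore the final piece of the concatenation. I would reuse the previous construction for the pieces above $o$ and append from the odd $o$-witness only its even prefix $p_{o, 0}, \ldots, p_{o, 2^o - 1}$, contributing $2^o$ additional even positions. Evenness and the internal Inner Domination pairs inside this prefix hold exactly as before; the discarded odd tail position $p_{o, 2^o}$ never enters any consecutive pair of the chain; and the new boundary entering $p_{o, 0}$ from the preceding even piece is handled just as in the previous case by Outer Domination. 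The backward case reduces to the forward case by running the construction on $\overleftarrow{\rho}$ and then reversing the positions, since the Order and Inner Domination conditions of an even chain are symmetric under reversal. The only delicate bookkeeping is in the odd case, where one must verify that sitting at the low-index end of the concatenation is what makes it safe to drop the single odd tail position of the $o$-witness.
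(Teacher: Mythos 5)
The paper does not actually prove this lemma in the text---it is imported verbatim from \cite{FJKSSW19}---so there is no in-paper proof to compare against; I can only assess correctness. Your construction is the standard one and it works: concatenating, in decreasing order of $i \in \evenodd(\mathbf b)$, the full even $i$-witnesses and the even prefix $p_{o,0},\ldots,p_{o,2^o-1}$ of the single odd $o$-witness yields exactly $\sum_{i\in\evenodd(\mathbf b)}2^i = \val(\mathbf b)$ even positions; the in-piece pairs are handled by Inner domination, the cross-piece pairs by Outer domination of the earlier (higher-indexed) piece together with the Ordered sequences property; and the backward case does reduce to the forward one on $\overleftarrow{\rho}$ because the even-chain conditions are symmetric under reversal.

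One auxiliary claim is false, though harmlessly so. You assert that monotonicity of $\mathbf b$ forces every $b_i$ with $i<o$ to be $\_$ or odd, hence $\evenodd(\mathbf b)=\even(\mathbf b)\cup\{o\}$. The monotonicity of a witness is numeric ($b_i \le b_j$ for $i<j$ when both are colours), and an even colour sitting below an odd one is perfectly admissible: $\mathbf b = 5,4$ is a witness with $o=1$ and $b_0=4$ even, so $\evenodd(\mathbf b)=\{1\}\neq\even(\mathbf b)\cup\{1\}$. This is precisely why $\evenodd$ is \emph{defined} to retain only the even indices strictly above $o$; indices below $o$ are discarded whether or not they carry an even colour. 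Your construction survives because it only ever draws pieces from $\evenodd(\mathbf b)$ and because $o=\min\evenodd(\mathbf b)$ holds by definition, which is all you need to place the truncated odd piece last; but the justification should appeal to the definition of $\evenodd$ rather than to a structural property of witnesses that does not hold.
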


If we count the number of vertices with even colours in the game as $e = |\{v \in V \; : \; \phi(v) \text{ is even }\}|$, then we can observe that in case we have an even chain longer than $e$ then $\rho$ contains a cycle,
as there is a vertex with even colour visited twice in this even chain.
Moreover, the cycle is winning player \emph{even}, since the largest priority of its vertices must be even. As a consequence, if player \emph{even} can force a play that has a witness whose value is strictly
greater than $e$, he wins the game.

\begin{lemma}\cite{FJKSSW19}
\label{lem:correct}
If, from an initial state $v_0$, player \emph{even} can force the game to run
through a sequence $\rho$, such that $\rho$ has a (forward or backward)
witness $\mathbf b$ such that $\val(\mathbf b)$ is greater than the number of vertices
with even colour, then player \emph{even} wins the parity game starting at $v_0$.
\end{lemma}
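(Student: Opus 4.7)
The plan is to combine the previous lemma (the existence of an even chain of length $\val(\mathbf b)$ in $\rho$) with a pigeonhole argument on even-coloured vertices and then lift the resulting winning cycle to a winning strategy.

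First I would invoke the preceding lemma to conclude that $\rho$ contains an even chain $p_1 < p_2 < \ldots < p_\ell$ of length $\ell = \val(\mathbf b) > e$. By definition of an even chain, each $v_{p_j}$ has an even colour. Since there are only $e$ vertices with an even colour in the game, the pigeonhole principle yields indices $j < j'$ in $\{1,\ldots,\ell\}$ with $v_{p_j} = v_{p_{j'}}$. The subsequence $v_{p_j}, v_{p_j+1}, \ldots, v_{p_{j'}}$ is therefore a cycle in the graph, and by the inner-domination property of even chains applied iteratively to the steps $p_j \to p_{j+1} \to \ldots \to p_{j'}$, the maximum colour visited along the cycle is bounded by $\max\{\phi(v_{p_j}), \phi(v_{p_{j+1}}),\ldots,\phi(v_{p_{j'}})\}$, which is even because all of these are colours of even-chain positions. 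Thus the cycle is \emph{won by even}.

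Next I would turn this cycle into a winning strategy. Player even has, by assumption, a strategy $\tau$ that forces the play to traverse $\rho$; in particular $\tau$ forces the game to reach $v_{p_j}$ along the initial segment $\rho_{\le p_j} = v_0,\ldots,v_{p_j}$, and from $v_{p_j} = v_{p_{j'}}$ there is a sequence of moves (those used in the cycle) that returns to the same vertex while only visiting colours whose maximum is even. A strategy for even that plays according to $\tau$ until $v_{p_j}$ is first visited, and afterwards repeats the cyclic sequence of choices forever, yields an infinite play whose $\limsup$ of colours is even, which matches the winning-loop criterion introduced in the preliminaries. This exhibits a winning strategy for even from $v_0$.

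The only step that requires a bit of care is the last one: ensuring that the cyclic choices truly can be repeated. For vertices $v \in V_e$ on the cycle this is immediate, as even may simply make the same choice he made the first time round. For vertices $v \in V_o$ on the cycle, the assumption that $\tau$ \emph{forces} $\rho$ means that $\rho$ is consistent with $\tau$ \emph{against every} response of odd, so in particular odd was forced to make the very moves that produced the cycle in $\rho$ and remains forced to do so whenever the same even-strategy is played from the same state; equivalently, one can appeal to the memoryless determinacy of parity games to extract a positional winning strategy once the winning-loop criterion is established. Either way, this is the only subtle part of the argument; the rest is the pigeonhole and the direct reading of the even-chain definition.
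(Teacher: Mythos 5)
Your argument is correct and follows essentially the same route as the paper, which only sketches this lemma informally (citing \cite{FJKSSW19}): the preceding lemma gives an even chain of length $\val(\mathbf b)>e$, pigeonhole on the $e$ even-coloured vertices yields a cycle whose maximal colour is even by inner domination, and the paper's winning-loop criterion from the preliminaries then gives the win. Your final paragraph about actually re-traversing the cycle is superfluous under that criterion (it suffices that every play consistent with the forcing strategy contains such a loop), but it does no harm.
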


For this reason, we only need witnesses with value $\leq e$.
If an update would produce a witness of value $>e$, then the resulting witness must contain a winning cycle.

Thus, the set of classic witnesses is $\mathbb W = \{\mathbf b \mid \mathbf b \mbox{ is a witness with } \val(b) \leq e\} \cup \{\won\}$.

\subsection{Updating Witnesses}

Forward witnesses can be constructed incrementally by processing
the play one vertex at a time. The following lemmas assume that
we have a play $\rho = v_0, v_1, \dots, v_{m}$, and a new vertex $v_{m+1}$ that
we are going to append to $\rho$ in order to create $\rho'$. The value $d =
\phi(v_{m+1})$ denotes the colour of the new vertex $v_{m+1}$. We will suppose that
$\mathbf b = b_k, b_{k-1}, \ldots, b_1,b_0$ is a witness for $\rho$, and we will
construct a witness $\mathbf c = c_k, c_{k-1}, \ldots, c_1,c_0$ for $\rho'$.

We present three lemmas that allow us to perform this task.

\begin{lemma}\cite{FJKSSW19}
\label{lem:up.overflow}
Suppose that $d$ is even, there exists an index $j$ such that:
\begin{itemize}
    \item $b_i$ is even for all $i < j$,
    \item $b_j$ is odd or equal to $\_$, and
    \item $b_i \ge d$ or equal to $\_$ for all $i > j$.
\end{itemize}
If we set $c_i = b_i$ for
all $i > j$, $c_j = d$, and $c_i = \_$ for all $i < j$, then $\mathbf{c}$ is a
witness for $\rho'$.
\end{lemma}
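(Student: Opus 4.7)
My plan is to verify that $\mathbf c$ is a witness for $\rho' = \rho \cdot v_{m+1}$ by exhibiting an $i$-witness for every $c_i \ne \_$ and then checking the global witness properties. For $i > j$ I reuse the existing $i$-witness of $\mathbf b$ (this is legal since $c_i = b_i$). For $i = j$ I construct a fresh even $j$-witness by concatenating, in order, the witnesses of $b_{j-1}, b_{j-2}, \ldots, b_0$ from $\mathbf b$ and appending the new position $m+1$. The hypothesis that $b_i$ is even (hence non-$\_$) for every $i < j$ guarantees that all required sub-witnesses exist, and the total length equals $\sum_{i=0}^{j-1} 2^i + 1 = 2^j$, which is exactly the length of an even $j$-witness; moreover $\phi(v_{m+1}) = d$ is even, matching $c_j = d$.

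The technical core is to verify the five \iw\ properties for this freshly built $j$-witness. Position and Evenness are immediate from the hypotheses. Order follows from the Ordered sequences property of $\mathbf b$, since the sub-witnesses indexed $j-1 > j-2 > \cdots > 0$ appear in that (decreasing) index order in $\rho$, hence in increasing position order. Inner domination splits into three types of gap: (i) within a single sub-witness, covered by the original inner-domination of that sub-witness; (ii) between the last position of the $(i+1)$-sub-witness and the first of the $i$-sub-witness, bounded by $b_{i+1}$ via outer domination of the $(i+1)$-witness in $\mathbf b$; and (iii) between $p_{0,1}$ and $m+1$, bounded by $b_0$ via outer domination of the $0$-witness. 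Outer domination at the terminal position $m+1$ is vacuous.

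For the reused $i$-witnesses with $i > j$, the only property that could break when extending $\rho$ is outer domination, and it is preserved by the hypothesis $d \le b_i$ whenever $b_i \ne \_$. The global witness properties of $\mathbf c$ then follow readily: Dominating colour is immediate; Ordered sequences holds because the new $j$-witness starts at $p_{j-1,1}$ (or at $m+1$ when $j=0$), which in either case lies strictly after every $p_{i, 2^i}$ for $i > j$ with $b_i \ne \_$, by the ordered-sequences property of $\mathbf b$; and $c_0$ is either $\_$ (when $j>0$) or the even value $d$ (when $j=0$), meeting the extra conciseness requirement. I expect the inner-domination bookkeeping for the new $j$-witness to be the only mildly involved step, and even there the argument is mechanical once outer domination of the sub-witnesses of $\mathbf b$ is invoked.
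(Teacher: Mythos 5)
Your proof is correct and follows essentially the same route the paper (and \cite{FJKSSW19}) relies on: the paper cites this lemma without reproving it, but its proof of the analogous overflow rule for colour witnesses (Lemma~\ref{lem:newup.overflow}) uses exactly your construction of concatenating the $i$-witnesses for $i<j$ in decreasing index order and appending $m+1$, with the length count $\sum_{i=0}^{j-1}2^i+1=2^j$ and the gap-by-gap inner-domination check via outer domination of the sub-witnesses. Your handling of the reused $i$-witnesses for $i>j$ (only outer domination needs $d\le b_i$) and of the $j=0$ and conciseness edge cases is also as expected.
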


Note that we returned to the original definition from Calude et al.~\cite{CJKLS17} by restricting the `overflow rule' from Lemma \ref{lem:up.overflow} to even numbers, whereas the witnesses from \cite{FJKSSW19} also allowed this operation
to be performed in the case where $d$ is odd.
The reason for this is that it reduces the statespace:
while this reduction is insignificant in most cases, it is quite substantial if $e=2^p-1$ for some power $p \in \mathbb N$, as it leads to an increase in the length of the witness.
As this statespace reduction is a core target of this paper, we opted to be precise here.

Note that the following lemmas (and their proofs) are essentially independent of this.

\begin{lemma}\cite{FJKSSW19}
\label{lem:up.local}
Suppose that $d \in C^-$ and there exists an index $j$ such that:
\begin{itemize}
    \item $d > b_j \neq \_$ and
    \item $b_i \geq d$ or equal to $\_$ for all $i>j$.
\end{itemize}
Then setting $c_i = b_i$ for all $i > j$, setting $c_j = d$ if $j\neq 0$ (and $c_j=\_$ if $j=0$), and setting $c_i = \_$ for all $i < j$ yields a witness for $\rho'$.
\end{lemma}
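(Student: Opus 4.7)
The claim is that $\mathbf c$ is a witness of $\rho' = v_0, v_1, \ldots, v_{m+1}$ with $d = \phi(v_{m+1})$. The plan is to exhibit an \iw\ in $\rho'$ for every non-$\_$ entry of $\mathbf c$ and to verify the remaining witness requirements (dominating colour, ordered sequences, and that $c_0$ is even or $\_$). For each $i > j$ with $c_i = b_i \ne \_$, I inherit the old \iw\ from $\mathbf b$ unchanged: every property is preserved in $\rho'$ except outer domination, which must now also cover $k = m+1$, reducing to $\phi(v_{m+1}) = d \le b_i$ — precisely the hypothesis $b_i \ge d$. If $j = 0$, the construction sets $c_0 = \_$, and nothing further is needed.

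For $j > 0$, I build a fresh $j$-witness for $c_j = d$ as follows. Let $r_1 < r_2 < \cdots < r_{2^j}$ denote the $2^j$ even-coloured positions of the old $j$-witness for $b_j$ — that is, $p_{j,1}, \ldots, p_{j,2^j}$ if $b_j$ is even, and $p_{j,0}, \ldots, p_{j,2^j - 1}$ if $b_j$ is odd. These form an even chain in $\rho$, by restricting the old inner-domination to the even-coloured positions. I then append $m+1$: if $d$ is even, take $r_2, \ldots, r_{2^j}, m+1$ as the new even $j$-witness of length $2^j$; if $d$ is odd, take $r_1, \ldots, r_{2^j}, m+1$ as the new odd $j$-witness of length $2^j + 1$, with its last position carrying the odd colour $d$.

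Order, evenness, the dominating-colour property, and outer domination (vacuous at the end of $\rho'$) are immediate, and inner domination on every segment between two successive $r_k$'s is inherited. The only new segment to check is $[r_{2^j}, m+1]$: every colour in $(r_{2^j}, m]$ is at most $\max(\phi(r_{2^j}), b_j)$ — directly by old outer domination when $b_j$ is even (so $r_{2^j} = p_{j, 2^j}$), and by combining old inner domination from $r_{2^j} = p_{j, 2^j - 1}$ to $p_{j, 2^j}$ with old outer domination thereafter when $b_j$ is odd — and since $d > b_j$ and $\phi(v_{m+1}) = d$, this bound is at most $\max(\phi(r_{2^j}), d)$, as required. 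Ordered sequences with respect to the unchanged higher \iw es survive because, in every parity subcase, the new starting position ($r_1$ or $r_2$) is at least as large as the old starting position of the $j$-witness, and therefore still strictly exceeds $p_{i, 2^i}$ for every $i > j$ with $b_i \ne \_$.

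The main obstacle I anticipate is precisely the inner-domination check on $[r_{2^j}, m+1]$ when $b_j$ is odd: here $r_{2^j}$ is not the final position of the old $j$-witness, so the bound must be pieced together from both old inner and old outer domination, and the strict inequality $d > b_j$ is used essentially to absorb $b_j$ into $d$.
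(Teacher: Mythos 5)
Your proof is correct and takes essentially the same approach as the paper, which defers to \cite{FJKSSW19} and only adds the remark that the new $j$-witness is obtained by replacing the last index with $m+1$ when $b_j$ is odd and by appending $m+1$ when $b_j$ is even --- exactly what your uniform construction via the even-coloured positions $r_1,\ldots,r_{2^j}$ specializes to. You merely spell out in full the inner/outer-domination checks that the paper leaves to the cited reference.
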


There is a tiny difference in the proof of this lemma in that we require the length of the $j$-witness to be $2^j+1$ when $c_j$ is set to $d$. 
But either $b_j$ was odd before, in which case replacing the last index of the $j$-witness by $m+1$ still produces a witness of length $2^j+1$, or it was even, and in that case we can instead append $m+1$ to the old $j$-witness.

\begin{lemma}\cite{FJKSSW19}
\label{lem:up.stale}
Suppose that $d \in C^-$ is odd and, for all $j \leq k$, either $b_j = \_$ or $b_j \geq d$. If we
set $c_i = b_i$ for all $i \leq k$ (i.e.\ if we set $\mathbf c = \mathbf b$), then $\mathbf c$ is a witness for $\rho'$.
\end{lemma}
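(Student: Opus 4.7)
The plan is to observe that the claim is essentially a single-property check: since $\mathbf c = \mathbf b$, none of the positions in the underlying family of \iw es change, and the only witness property whose statement actually quantifies over positions of the play prefix (as opposed to positions listed inside the witness) is \textbf{outer domination}. So I would first note that the \textbf{position} bounds remain satisfied because $\rho'$ is longer than $\rho$; the \textbf{order}, \textbf{evenness}, \textbf{inner domination}, \textbf{dominating colour}, and \textbf{witnessing} properties all refer only to the existing positions $p_{i,0},\dots,p_{i,2^i}$ and their colours in $\rho$, which are untouched. This collapses the proof to verifying outer domination for $\rho'$.

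Next I would fix any $i \le k$ with $b_i \neq \_$ and check that for all indices $r$ with $p_{i,2^i} < r \le m+1$, the inequality $\phi(v_r) \le \phi(v_{p_{i,2^i}}) = b_i$ holds. For $r \le m$ this follows because $\mathbf b$ was a witness for $\rho$. For $r = m+1$ the colour is $\phi(v_{m+1}) = d$, and the hypothesis of the lemma gives $b_i \ge d$ (since $b_i \neq \_$), so $\phi(v_{m+1}) = d \le b_i = \phi(v_{p_{i,2^i}})$, as required.

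Finally, I would remark that the small subtlety worth flagging is that outer domination uses the ordinary numerical $\le$ on $\mathbb{N}$ rather than the refined order $\succeq$ on \cu, so the hypothesis ``$b_j = \_$ or $b_j \ge d$'' (where $\ge$ is numerical) is already in exactly the right form, even though $d$ is odd and some $b_j$ may be even. The intuition is the content-free one: a new odd colour that is dominated (numerically) by every recorded outer-domination colour cannot extend any even chain and cannot threaten any existing outer-domination bound, so no part of the witness needs to move. There is no real obstacle here; the only thing to be careful about is not to accidentally invoke $\succeq$ in place of $\ge$ when discharging outer domination.
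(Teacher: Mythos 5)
Your proof is correct; the paper itself omits a proof of this lemma (citing \cite{FJKSSW19}), but your argument — that only outer domination quantifies over positions of the play prefix, and that the new vertex's colour $d$ is numerically dominated by every recorded outer-domination colour $b_i$ — is exactly the argument the paper gives for the analogous colour-witness version (Lemma \ref{lem:newup.stale}). Your closing remark correctly identifies that outer domination uses numerical $\le$ rather than $\succeq$, which is indeed the only subtlety worth flagging here.
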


When we want to update a witness with the raw udate rule upon scanning another state $v_{m+1}$ with colour $d = \phi(v_{m+1})$, we select the according lemma if $d \in C^-$.
Otherwise, i.e.\ when $d = \max\{C\}$ and odd, we re-set the witness to $\_,\ldots,\_$ (which is a witness for every play prefix).

For a given witness $\mathbf b$ and a vertex $v_{m+1}$, we denote with
\begin{itemize}
 \item \textbf{Raw update:} $\ru'(\mathbf b,d)$ the raw update of the witness to $\mathbf c$, as
obtained by the update rules described above.
 \item \textbf{Update:} $\up'(\mathbf b,d)$ is either $\ru'(\mathbf b,d)$ if
$\val\big(\ru(\mathbf b,d)\big) \leq e$ (where $e$ is the number of
vertices with even colour), or $\up'(\mathbf b,d) = \won$ otherwise.

In particular $\up'(\won,d) = \won$ holds for all $d \in C$.
\item \textbf{Antagonistic update:} $\au'(\mathbf b,d) = {\min}_{\sqsubseteq'}\big\{\up'(\mathbf c,d) \mid \mathbf b {\sqsubseteq'} \mathbf c \in \mathbb W \big\}$.
\end{itemize}

\paragraph{\bf Basic and Antagonistic Update Game}
With these update rules, we define a forward and a backward basic update game played between the two players \emph{he} and \emph{she}.
In this game, they produce a play of the game as usual: if the pebble is on a \emph{his} position, then he selects a successor, and if the pebble is on a \emph{her} position, then she selects a successor.

Player \emph{even} can stop any time he likes and evaluate the game using $\mathbf b_0 = \_,\ldots,\_$ as a starting point and the update rule $\mathbf b_{i+1} = \up'(\mathbf b_i,v_i)$ (in the basic update game) and $\mathbf b_{i+1} = \au'(\mathbf b_i,v_i)$ (in the antagonistic update game), respectively.

For a forward game, he would process the partial play $\rho^+ = v_0, v_1, v_2, \ldots, v_n$ from left to right, and for the backward game he would process the partial play $\rho^- = v_n, v_{n-1}, \ldots, v_0$.
In both cases, he has won if, and only if, $\mathbf b_{n+1} = \won$.

\begin{theorem}\cite{FJKSSW19}
\label{theo:basic}
If, and only if, player \emph{even} has a strategy to win the classic forward resp.\ backward basic resp.\ antagonistic update game, then he has a strategy to win the parity game.
\end{theorem}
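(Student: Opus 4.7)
The plan is to prove the four equivalences (forward vs.\ backward, basic vs.\ antagonistic) in parallel by separating the two directions of the `if and only if' and exploiting two structural reductions. First, I will exploit monotonicity: since the update rules in Lemmas \ref{lem:up.overflow}, \ref{lem:up.local}, and \ref{lem:up.stale} are monotone in the witness argument with respect to $\sqsubseteq'$, we have $\up'(\mathbf b, d) \sqsubseteq' \au'(\mathbf b, d)$ for all $\mathbf b$ and $d$, so the antagonistic game is at least as hard for even as the basic one. Second, I will use the forward/backward symmetry: a backward witness of $\rho$ is a forward witness of $\overleftarrow{\rho}$, and the inner-domination condition defining even chains is palindromic, so the backward case is obtained by applying the forward argument to the reversed play.

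For the direction \emph{update game win $\Rightarrow$ parity game win}, let $\tau$ be a winning strategy for even in the forward basic update game. Along any play consistent with $\tau$, the evaluation chain of witnesses eventually reaches $\won$, which by the definition of $\up'$ means the raw update would have produced a witness of value strictly greater than $e$. Lemma \ref{lem:correct} then guarantees that the play prefix witnessed at that point contains a cycle that is winning for even, and combining $\tau$ on the prefix with looping this cycle forever yields a winning strategy in the parity game. The antagonistic version is immediate from the monotonicity remark, and the backward variants transfer by the forward/backward symmetry.

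For the direction \emph{parity game win $\Rightarrow$ update game win}, I start from a positional winning strategy $\sigma$ for even. Every play consistent with $\sigma$ eventually enters a simple cycle whose maximum colour is even, so after a finite prefix it repeats such a cycle indefinitely. Each traversal of the cycle contributes at least one even-coloured vertex whose colour dominates (in the inner-domination sense) all intermediate colours of that traversal, so by concatenating positions across traversals one obtains arbitrarily long even chains in the play prefix. Applying the three update lemmas inductively, the value of the current witness grows monotonically with the length of the embedded chain, and once that length would exceed $e$ the update $\up'$ produces $\won$. For the antagonistic game, the same chain-growth argument has to hold \emph{uniformly} against every dominating $\mathbf c \sqsupseteq' \mathbf b$; this works because the combinatorial fact that the cycle produces long even chains is a property of the play alone, independent of the witness content on which the update acts.

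The main obstacle will be the antagonistic completeness argument, where one must rule out that the adversary's choice of a dominating $\mathbf c$ at each step can perpetually block promotion of witness indices. The key is that any $\mathbf c \sqsupseteq' \mathbf b$ of value $\leq e$ still carries valid $j$-witnesses for each non-$\_$ entry, so repeated visits to even colours of the cycle must eventually trigger either an overflow (Lemma \ref{lem:up.overflow}) at some index whose bit in $\val$ is zero, or a local improvement (Lemma \ref{lem:up.local}) that strictly increases the value under $\sqsubseteq'$. Since the value is bounded by $e$ below $\won$, neither move can be repeated indefinitely without reaching $\won$; making this bookkeeping precise (in particular, handling the interaction with odd entries via Lemma \ref{lem:up.stale}) is the place where the proof from \cite{FJKSSW19} needs the most care, and where my minor redefinitions of the value function and the treatment of odd-ending witnesses must be checked to preserve the argument.
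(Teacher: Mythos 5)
The paper does not actually prove Theorem~\ref{theo:basic}; it imports it from \cite{FJKSSW19}, so your attempt must be judged against the known argument rather than a proof in the text. Your soundness direction (update-game win $\Rightarrow$ parity-game win) is essentially right: it is exactly what Lemma~\ref{lem:correct} packages, given that the paper's definition of winning already admits ``every consistent play contains an even-dominated loop'' as a criterion, and the reduction of the backward case to the forward case on $\overleftarrow{\rho}$ is fine. But two of your auxiliary claims are wrong as stated. First, the basic update $\up'$ is \emph{not} monotone in the witness argument --- that is precisely why the antagonistic update is introduced; only $\au'$ is monotone. Second, your inequality is reversed: since $\au'(\mathbf b,d)$ is a \emph{minimum} over a set containing $\up'(\mathbf b,d)$, the correct relation is $\au'(\mathbf b,d) \sqsubseteq' \up'(\mathbf b,d)$ (your conclusion that the antagonistic game is harder for even is right, but the displayed inequality says the opposite). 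Third, a play consistent with a positional winning strategy need \emph{not} eventually repeat one simple cycle forever, since odd may vary her choices indefinitely; what is true is only that the highest colour occurring infinitely often is even, and the existence of arbitrarily long even chains has to be derived from that.

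The genuine gap is the core of the completeness direction. The assertion that ``applying the three update lemmas inductively, the value of the current witness grows monotonically with the length of the embedded chain'' is both unproven and false as stated: Lemmas~\ref{lem:up.overflow}--\ref{lem:up.stale} only certify that certain updates \emph{yield} witnesses, not that they yield good ones, and the maintained value can drop --- Lemma~\ref{lem:up.local} with an odd $d$ erases all lower positions, and an occurrence of $\max\{C\}$ odd resets the witness to $\_,\ldots,\_$ --- while the even chain embedded in the prefix only grows. The actual content of the theorem is the invariant that the raw update applied to the $\sqsupseteq'$-maximal witness of $\rho$ produces the $\sqsupseteq'$-maximal witness of $\rho'$ (so genuinely available progress is never lost), combined with the fact that under a positional winning strategy every sufficiently long consistent prefix contains an even chain of length $>e$. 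For the antagonistic game one needs the strengthening that $\up'(\mathbf c,d)$ dominates the maximal witness of $\rho'$ for \emph{every} $\mathbf c$ dominating the maximal witness of $\rho$, not just for $\mathbf c$ equal to it; your closing bookkeeping about values bounded by $e$ does not by itself rule out that the adversary's inflation stalls progress forever. Without these two lemmas the ``only if'' direction is not established.
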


This can be formulated in a way that, for a given set $C$ of colours and $e$ states with even colour, the deterministic reachability automaton with states $\mathbb W$, initial state $\_,\_, \ldots,\_$, update rules $\up'$ (or $\au'$), and reachability goal to reach $\won$ is a separating automaton.

\begin{corollary}
For a parity game with $e$ states and $k=\lfloor \log_2(e) \rfloor$ and $\mathbb W$ the space for witnesses of value $\leq e$, length $k+1$ and colours $C$, both
$\mathcal U=(\mathbb W;C;\_,\ldots,\_;\up';\won)$ and $\mathcal A=(\mathbb W;C;\_,\ldots,\_;\au';\won)$ are separating automata.
\qed
\end{corollary}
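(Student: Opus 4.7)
The plan is to derive the corollary directly from Theorem~\ref{theo:basic} after verifying that $\mathcal U$ and $\mathcal A$ meet the structural requirements of a separating automaton. First I would observe that both automata are finite deterministic reachability automata: $\mathbb W$ is finite by construction (witness sequences of length $k+1$ with value at most $e$, together with the sink $\won$), the initial state $\_,\ldots,\_$ is a witness for the empty play prefix, and both $\up'$ and $\au'$ are total deterministic transition functions that map $\won$ to $\won$, so $\won$ is a genuine sink.

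For the even-wins clause, I would fix a vertex $q \in V$ winning for player even and, using memoryless determinacy of parity games, pick a positional winning strategy $\sigma$. By the forward direction of Theorem~\ref{theo:basic} player \emph{he} has a winning strategy in the basic (respectively antagonistic) update game from $q$; the positional choice $\sigma$ on the parity moves inherits this. I would then argue that on every play $\rho$ consistent with $\sigma$ the corresponding run of $\mathcal U$ or $\mathcal A$ actually reaches $\won$ in finitely many steps: otherwise the run stays within $\mathbb W \smallsetminus \{\won\}$, and since $\sigma$ is winning for even, $\limsup \phi$ along $\rho$ is even, which by Lemma~\ref{lem:correct} eventually forces a witness of value greater than $e$---contradicting $\val(\mathbf b) \leq e$ for all $\mathbf b \in \mathbb W \smallsetminus \{\won\}$.

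For the odd-wins clause I would dualise: fix $q$ winning for odd, take a positional winning strategy $\tau$, and invoke the converse direction of Theorem~\ref{theo:basic}. That direction tells us that \emph{she} has a winning strategy in the update game from $q$, meaning no matter what \emph{he} plays, the witness never reaches $\won$. Since $\up'$ and $\au'$ are deterministic, this translates directly into: every play in $\Plays(q)$ consistent with $\tau$ produces a run of $\mathcal U$ (respectively $\mathcal A$) that avoids $\won$, as required for rejection.

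The main obstacle I expect is the finite-time argument in the second paragraph. Theorem~\ref{theo:basic} only promises a winning strategy for \emph{him} in the update game, and the basic update game additionally lets him choose \emph{when} to stop and declare victory. Lifting this to a uniform ``the automaton run reaches $\won$ along every play consistent with $\sigma$'' requires the cycle argument sketched above together with the monotonicity of the witness updates under ${\sqsupseteq'}$ encoded in Lemmas~\ref{lem:up.overflow}, \ref{lem:up.local}, and~\ref{lem:up.stale}, which together prevent the witness value from dropping indefinitely along a tail dominated by even colours.
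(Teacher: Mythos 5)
Your overall route is the paper's own: the corollary carries no separate proof there and is presented as a direct reformulation of Theorem~\ref{theo:basic} in the language of separating automata, which is exactly what you do, fleshed out with the standard bookkeeping (finiteness of $\mathbb W$, $\won$ being a sink, memoryless determinacy to get the positional strategies the definition demands). The one step that does not work as written is your appeal to Lemma~\ref{lem:correct} in the even-wins clause: that lemma is the \emph{soundness} direction (if even can force a witness of value greater than $e$, then he wins the parity game), whereas you need the converse, completeness, direction (a positional winning strategy for even forces every consistent play to eventually produce a witness of value greater than $e$, hence the run of $\mathcal U$ resp.\ $\mathcal A$ to reach $\won$). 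That converse is precisely the nontrivial content of Theorem~\ref{theo:basic} (proved in the cited prior work), so your argument survives once you route that step through the theorem rather than through Lemma~\ref{lem:correct}; but as cited, the lemma does not deliver the implication you attribute to it, and the intermediate claim that an even $\limsup$ alone forces the witness value to grow is likewise not something the update lemmas give you without the full completeness argument.
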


The advantages of the antagonistic update rule is that it is monotone: $\mathbf{b}\sqsubseteq'\mathbf{c} \rightarrow \au'(\mathbf{b},d) \sqsubseteq \au'(\mathbf{c},d)$.
This allows for using $\au'$ in a value iteration algorithm \cite{FJKSSW19}.

\section{Concise Witness}
\label{sec:concise}

As in this article we suggest a change in the semantics of the witness, which reduces the statespace of witnesses to $\mathbb C \subset \mathbb W$, in the following section
we will improve the update rule.

The main theoretical advancement is the smaller statespace, as it directly translates into improved bounds, slightly outperforming the currently leading QP algorithm in this regard.

For this, we define a truncation operator $$\downarrow_1 \colon \mathbb W \rightarrow \mathbb C$$
that, for every odd colour $o \in C^-$, leaves only the leftmost occurrences of $o$ in a witness and replaces all other occurrences of $o$ in $\mathbf b$ by $\_$.

For example, $\downarrow_1 \_,7,\_,7,5,4,\_,3,3,\_,2 = \_,7,\_,\_,5,4,\_,3,\_,\_,2$, and $\downarrow_1 3,3,2 = 3,\_,2$.
We also have $\downarrow_1 \won = \won$.

Note that the definition of $\downarrow_1$ entails $$\even(\downarrow_1 \mathbf b) = \even(\mathbf b)\ ,$$
as well as $$\val(\downarrow_1 \mathbf b) = \val(\mathbf b)\ .$$

Building on the definition of $\downarrow_1$, we continue with the following definitions.

\begin{itemize}
\item $\mathbb C = \{\downarrow_1 \mathbf b \mid \mathbf b \in \mathbb W\}$,
\item \textbf{Raw update:} $\ru(\mathbf b,d) = \downarrow_1 \ru'(\mathbf b,d)$,
\item \textbf{Update:} $\up(\mathbf b,v) = \downarrow_1 \up'(\mathbf b,d)$,
\item \textbf{Order over witnesses:} for all $\mathbf b, \mathbf c \in \mathbb C$, $\mathbf b \sqsubseteq \mathbf c$ if, and only if, $\mathbf b \sqsubseteq' \mathbf c$  (i.e.\ $\sqsubseteq$ is simply a restriction of $\sqsubseteq'$ from $\mathbb W$ to $\mathbb C$), and

\item \textbf{Antagonistic update:} $\au(\mathbf b,d) = {\min}_\sqsubseteq\big\{\up(\mathbf c,d) \mid \mathbf b \sqsupseteq \mathbf c \in \mathbb C \big\}$

\end{itemize}

\begin{lemma}
\label{lem:ru}
If $\mathbf b = \downarrow_1 \mathbf c$, then $\ru(\mathbf b,d) = \downarrow_1 \ru'(\mathbf c,d)$.
\end{lemma}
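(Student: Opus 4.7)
By the definition $\ru(\mathbf b, d) = \downarrow_1 \ru'(\mathbf b, d)$, the claim reduces to
\[\downarrow_1 \ru'(\mathbf b, d) = \downarrow_1 \ru'(\mathbf c, d).\]
The plan is a case analysis on which of Lemmas~\ref{lem:up.overflow}--\ref{lem:up.stale} applies to $\mathbf c$ (the fourth case, $d = \max\{C\}$ odd, resets both witnesses to $\_,\ldots,\_$ and is trivial), showing that the same rule fires on $\mathbf b$ with the same witness index $j$, and that the two raw updates truncate to the same concise witness under $\downarrow_1$.

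The enabling observation to record up front is that $\mathbf b$ and $\mathbf c$ agree pointwise except at indices $i$ where $c_i$ is an odd colour whose occurrence at $i$ is \emph{not} the leftmost (highest-index) copy of that colour in $\mathbf c$; at such indices $b_i = \_$. In particular, the set of indices carrying an even entry coincides in $\mathbf b$ and $\mathbf c$, and the set of $\_$ indices in $\mathbf b$ extends that of $\mathbf c$.

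The heart of the argument is showing $j_b = j_c$ in each non-trivial case. For Lemma~\ref{lem:up.stale}, the premises on $\mathbf b$ and on $\mathbf c$ are equivalent via the monotonicity of non-$\_$ colours along a witness: any non-$\_$ $c_i < d$ has its leftmost copy at some $i' \geq i$ with $b_{i'} = c_{i'} < d$. For Lemma~\ref{lem:up.overflow} (even $d$) and Lemma~\ref{lem:up.local} (odd $d$, local), the conditions pin down a unique $j$; the easy direction checks that $j_c$ satisfies the conditions for $\mathbf b$ (turning odd entries into $\_$ preserves both ``$b_i$ even for $i < j$'' and ``$b_i \geq d$ or $\_$ for $i > j$''), while the harder direction uses the same leftmost-copy argument to show that any putative violator above $j_b$ in $\mathbf c$ forces a violator in $\mathbf b$. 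For Lemma~\ref{lem:up.local} there is an extra subtlety: $b_{j_c} = c_{j_c}$ must hold, i.e.\ $c_{j_c}$ is not a stripped-away odd entry; this follows because a leftmost copy of $c_{j_c}$ above $j_c$ would violate the condition ``$c_i \geq d$ or $\_$ for $i > j_c$''.

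Once $j_b = j_c = j$ is fixed, $\ru'(\mathbf b, d)$ and $\ru'(\mathbf c, d)$ agree for $i \leq j$ and carry the original entries of $\mathbf b$ and $\mathbf c$ for $i > j$. Applying $\downarrow_1$ then equalises them, because it fixes even entries and $\_$, and for each odd colour keeps only the highest-index occurrence, which sits at the same index in both sequences. The main obstacle is the subcase where $d$ itself is odd and already appears above $j$ in $\mathbf c$: the freshly written $d$ at index $j$ may be overruled by a higher copy present in $\ru'(\mathbf c, d)$ but absent from $\ru'(\mathbf b, d)$; I would resolve this by noting that after $\downarrow_1$ the leftmost copy of $d$ sits at the largest element of $\{j\} \cup \{i > j : c_i = d\}$, and this index is the same on both sides, so the non-leftmost copies of $d$ above $j$ in $\mathbf c$ have no effect on $\downarrow_1 \ru'(\mathbf c, d)$.
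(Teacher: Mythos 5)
Your proof is correct and follows essentially the same route as the paper's: a case analysis over the four raw-update cases, showing that the same rule fires at the same index $j$ for $\mathbf b$ and $\mathbf c$ and that $\downarrow_1$ then equalises the two results. You are in fact somewhat more careful than the paper on two points it glosses over---the explicit argument that $j_b = j_c$ via the leftmost-copy observation, and the treatment of a duplicate occurrence of an odd $d$ above $j$ in the case of Lemma~\ref{lem:up.local}---but the overall structure is the same.
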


\begin{proof}
We look at the effect the different update rules have on $\mathbf b$ and $\mathbf c$.
Lemma \ref{lem:up.overflow} would (for the same $j$) change the tail (starting with the $j$-witness) of $\mathbf c$ and $\mathbf b$ in the same way to $d,\_,\ldots,\_$, and they either both do or do not satisfy the prerequisites for its application.
Thus, the $\downarrow_1$ operator would remove exactly those positions $>j$ from $\ru'(\mathbf c,d)$ that it removed from $\mathbf c$.

When Lemma \ref{lem:up.local} applies, then it does so for the same index $j$, and it simply overwrites the tail starting there with  $d,\_,\ldots,\_$ (or with $\_$ if $j=0$).
As all higher positions are unchanged and greater or equal to $d$, $\mathbf b = \downarrow_1 \mathbf c$ implies $\ru(\mathbf b,d) = \downarrow_1 \ru'(\mathbf c,d)$.

When the conditions of Lemma \ref{lem:up.stale} apply either for both, $\mathbf b$ and $\mathbf c$, or for neither of them, then we note that Lemma \ref{lem:up.stale} does not change the witness.

Finally, if $d = \max\{C\}$ and odd, then $\ru'(\mathbf b,d) = \ru'(\mathbf c,d) = \_,\ldots,\_$, which implies $\ru(\mathbf b,d) = \downarrow_1 \ru'(\mathbf b,d) = \downarrow_1 \_,\ldots,\_ = \downarrow_1 \ru'(\mathbf c,d)$.
\qed
\end{proof}

\begin{corollary}
\label{cor:up}
If $\mathbf b = \downarrow_1 \mathbf c$, then $\up(\mathbf b,v) = \downarrow_1 \up'(\mathbf c,v)$ and $\au(\mathbf b,v) \sqsupseteq \downarrow_1 \au'(\mathbf c,v)$.
\qed
\end{corollary}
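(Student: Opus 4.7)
The plan is to handle the two assertions separately, both reducing to Lemma~\ref{lem:ru} together with the noted invariance $\val(\downarrow_1 \mathbf x) = \val(\mathbf x)$ and (for the second) the monotonicity of $\downarrow_1$ for the lex orders.

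For $\up(\mathbf b, v) = \downarrow_1 \up'(\mathbf c, v)$, I would unfold $\up(\mathbf b, v) = \downarrow_1 \up'(\mathbf b, v)$ from its defining equation, so the target becomes $\downarrow_1 \up'(\mathbf b, v) = \downarrow_1 \up'(\mathbf c, v)$. Because $\up'$ only differs from $\ru'$ when the value overflows, and because $\val \circ \downarrow_1 = \val$ combined with Lemma~\ref{lem:ru} (once on $\mathbf c$, and, using $\downarrow_1 \mathbf b = \mathbf b$, once on $\mathbf b$) yields $\val(\ru'(\mathbf b, v)) = \val(\ru'(\mathbf c, v))$, the overflow tests on $\mathbf b$ and $\mathbf c$ agree. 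In the overflow branch both sides equal $\downarrow_1 \won = \won$, and in the non-overflow branch the equality is Lemma~\ref{lem:ru}.

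For $\au(\mathbf b, v) \sqsupseteq \downarrow_1 \au'(\mathbf c, v)$, I would first establish monotonicity of $\downarrow_1$ by a case analysis on the highest position at which two $\sqsubseteq'$-comparable witnesses differ; the numerical monotonicity of colours in a witness rules out the only case where an odd entry could be dropped on the smaller side while surviving on the larger one. Using monotonicity of $\downarrow_1$ together with the first assertion, it then suffices to associate with every $\mathbf c' \in \mathbb C$ satisfying $\mathbf b \sqsubseteq \mathbf c'$ some lift $\tilde{\mathbf c} \in \mathbb W$ with $\mathbf c \sqsubseteq' \tilde{\mathbf c}$ and $\downarrow_1 \tilde{\mathbf c} = \mathbf c'$: then $\up(\mathbf c', v) = \downarrow_1 \up'(\tilde{\mathbf c}, v) \sqsupseteq \downarrow_1 \au'(\mathbf c, v)$ and one minimises over $\mathbf c'$. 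The lift I intend to use is $\tilde{\mathbf c} := \mathbf c'$ when $\mathbf c \sqsubseteq' \mathbf c'$, and otherwise the sequence obtained from $\mathbf c'$ by reinserting the odd non-leftmost entries of $\mathbf c$ at every position where $\mathbf c'$ is $\_$ and the same odd colour still occurs somewhere to its left in $\mathbf c'$.

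The hard part will be verifying the lift in the non-trivial case $\mathbf c' \sqsubset' \mathbf c$: that the reinsertion preserves the numerical monotonicity of colours (which follows from the fact that the same odd colour already appears at a higher index of $\mathbf c'$) and that the resulting $\tilde{\mathbf c}$ dominates $\mathbf c$ in $\sqsubseteq'$. This last point rests on the structural fact that at the highest index $i^*$ where $\mathbf c$ and $\mathbf c'$ differ, $\mathbf c_{i^*}$ must be odd and non-leftmost in $\mathbf c$, while $\mathbf c'_{i^*} = \_$ and $\mathbf c, \mathbf c'$ agree for positions $> i^*$; this both makes the reinsertion at $i^*$ well-defined and gives $\tilde{\mathbf c}$ enough room to stay above $\mathbf c$ in the lex order.
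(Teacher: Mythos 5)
Your treatment of the first claim is sound and is exactly the intended reduction: unfold the definition $\up(\mathbf b,v)=\downarrow_1\up'(\mathbf b,v)$, use Lemma~\ref{lem:ru} twice together with $\val\circ\downarrow_1=\val$ to see that the overflow tests on $\ru'(\mathbf b,v)$ and $\ru'(\mathbf c,v)$ agree, and conclude branch by branch.

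The second half, however, rests on a claim that is false: $\downarrow_1$ is \emph{not} monotone with respect to $\sqsubseteq'$. Take $\mathbf y = 3,3,\_$ and $\mathbf x = 3,\_,2$. Both are legitimate witnesses, and $\mathbf y \sqsupset' \mathbf x$ because at the highest differing position we compare $3$ against $\_$ and every colour beats $\_$. Yet $\downarrow_1\mathbf y = 3,\_,\_$ while $\downarrow_1\mathbf x = 3,\_,2$, so $\downarrow_1\mathbf y \sqsubset' \downarrow_1\mathbf x$. Your case analysis guards against an odd entry being dropped on the smaller side while surviving on the larger one; that case indeed cannot occur, but the failure mode is the opposite one. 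When the entry $y_{i^*}$ at the critical position of the \emph{larger} witness is an odd colour that is repeated further left, $\downarrow_1$ erases it to $\_$; the numerical monotonicity of colours (the very fact you invoke) then forces $x_{i^*}=\_$ as well, so after truncation the two witnesses \emph{tie} at $i^*$, and the lower-order positions are free to reverse the comparison. Since monotonicity fails, the step from $\up'(\tilde{\mathbf c},v) \sqsupseteq' \au'(\mathbf c,v)$ to $\downarrow_1\up'(\tilde{\mathbf c},v) \sqsupseteq \downarrow_1\au'(\mathbf c,v)$ is unjustified, and this is load-bearing both in your chain $\up(\mathbf c',v)=\downarrow_1\up'(\tilde{\mathbf c},v)\sqsupseteq\downarrow_1\au'(\mathbf c,v)$ and in the final minimisation over $\mathbf c'$; the lifting construction, whatever its merits, cannot repair it. An argument for the $\au$ inequality has to relate the two minimisers directly (exploiting the specific shape of the witnesses produced by $\ru'$, or that only an inequality rather than an equality is claimed) instead of transporting $\sqsupseteq'$ through $\downarrow_1$; note that the paper states this corollary without proof, so the burden of supplying such an argument is real.
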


\begin{theorem}
\label{theo:concise.basic}
If, and only if, player \emph{even} has a strategy to win the concise forward resp.\ backward basic antagonistic update game, then he has a strategy to win the parity game.
\end{theorem}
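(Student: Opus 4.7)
The plan is to bridge from Theorem~\ref{theo:basic} to the concise setting via the truncation $\downarrow_1$ and Corollary~\ref{cor:up}, handling both the basic and antagonistic variants (forward and backward). For the basic update game, I would couple the classic and concise games along the same play, both initialised to $\_,\ldots,\_$. A direct induction on the play length, using the equality $\up(\downarrow_1 \mathbf c, v) = \downarrow_1 \up'(\mathbf c, v)$ from Corollary~\ref{cor:up}, shows that at every step the concise witness equals $\downarrow_1$ of the classic one. Since $\downarrow_1 \won = \won$, $\downarrow_1$ sends regular witnesses to regular witnesses, and $\val$ is preserved by $\downarrow_1$ (so the overflow cut-off $\val > e$ triggers simultaneously), the two sequences reach $\won$ at exactly the same step. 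Strategies thus transfer verbatim between the two basic games, and combined with Theorem~\ref{theo:basic} this concludes the basic case.

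For the antagonistic update game, I would argue the two implications separately. For the easy direction, the inequality $\au(\mathbf b, v) \sqsubseteq \up(\mathbf b, v)$ holds because $\mathbf b$ itself lies in the min-set of $\au$, and $\au$ is monotone (as a min over a set of the form $\{\mathbf c : \mathbf c \sqsupseteq \mathbf b\}$, which shrinks as $\mathbf b$ grows in $\sqsubseteq$). Together, these imply by induction that the concise antagonistic witness is $\sqsubseteq$ the concise basic witness along the same play, so winning the concise antagonistic game implies winning the concise basic one, and hence the parity game by the basic case. For the converse, Theorem~\ref{theo:basic} yields a winning strategy $\sigma$ in the classic antagonistic game; playing $\sigma$ in the concise game, I would prove by induction that the concise antagonistic witness $\mathbf b_n$ satisfies $\mathbf b_n \sqsupseteq \downarrow_1 \mathbf c_n$, where $\mathbf c_n$ is the classic antagonistic witness. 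The inductive step combines the inequality $\au(\downarrow_1 \mathbf c, v) \sqsupseteq \downarrow_1 \au'(\mathbf c, v)$ from Corollary~\ref{cor:up} with the monotonicity of $\au$. When the classic game reaches $\won$, the concise one does too.

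The main obstacle is the antagonistic direction: unlike in the basic case, one cannot maintain the pointwise equality $\mathbf b_n = \downarrow_1 \mathbf c_n$, because the concise and classic antagonistic minimisations range over different universes ($\mathbb C$ versus $\mathbb W$), and $\downarrow_1$ is not monotone with respect to $\sqsubseteq'$ in general (a small counterexample can be built from witnesses that differ only by a cancelled odd duplicate). The weaker $\sqsupseteq$-domination is sufficient because $\won$ is the $\sqsupseteq$-maximum, and it is preserved inductively via the monotonicity of $\au$. The backward variants follow by applying the same argument to $\overleftarrow{\rho}$.
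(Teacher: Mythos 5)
Your proof is correct and follows essentially the same route as the paper: the basic case is exactly the paper's argument (coupling the classic and concise games via $\downarrow_1$ and Corollary~\ref{cor:up} so that the same runs are winning, then invoking Theorem~\ref{theo:basic}), and your antagonistic argument coincides with the paper's proof of the companion Theorem~\ref{theo:concise.antagonistic}, which uses the monotonicity of $\au$ together with Corollary~\ref{cor:up} in one direction and the domination of the antagonistic update by the basic update in the other. The only difference is presentational: the paper splits the basic and antagonistic cases into two separate theorems, whereas you treat both at once.
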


\begin{proof}
This follows from Theorem \ref{theo:basic}: because the same runs are winning when using $\up$ and $\up'$ due to Corollary \ref{cor:up}, the same player wins the classic and the concise basic update game.
\qed
\end{proof}

\begin{theorem}
\label{theo:concise.antagonistic}
If, and only if, player \emph{even} has a strategy to win the concise forward resp.\ backward antagonistic update game, then he has a strategy to win the parity game.
\end{theorem}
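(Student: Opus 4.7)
The plan is to prove the two implications separately, bootstrapping from Theorem~\ref{theo:basic} on one side and Theorem~\ref{theo:concise.basic} on the other, with Corollary~\ref{cor:up} and the monotonicity of $\au$ providing the bridge between the four games in play.

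For the direction ``even wins the parity game $\Rightarrow$ even wins the concise antagonistic update game,'' I would first invoke Theorem~\ref{theo:basic} to obtain a winning strategy $\sigma$ for even in the \emph{classic} antagonistic game, and then replay $\sigma$ in the concise antagonistic game. Given any play $\rho = v_0, v_1, \ldots$ consistent with $\sigma$, let $\mathbf c_i$ and $\mathbf b_i$ denote the classic and concise antagonistic state sequences along $\rho$, both initialised to $\_, \ldots, \_$. The key is the invariant $\mathbf b_i \sqsupseteq \downarrow_1 \mathbf c_i$, proved by induction on $i$. The base case is immediate; in the step, the monotonicity of $\au$ (inherited from its definition, since enlarging the first argument shrinks the set over which the minimum is taken) gives $\mathbf b_{i+1} = \au(\mathbf b_i, v_i) \sqsupseteq \au(\downarrow_1 \mathbf c_i, v_i)$, and Corollary~\ref{cor:up} applied with $\mathbf c = \mathbf c_i$ yields $\au(\downarrow_1 \mathbf c_i, v_i) \sqsupseteq \downarrow_1 \au'(\mathbf c_i, v_i) = \downarrow_1 \mathbf c_{i+1}$. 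Since $\sigma$ wins the classic game, every play eventually reaches some $\mathbf c_n = \won$; then $\downarrow_1 \mathbf c_n = \won$ and the invariant forces $\mathbf b_n = \won$, so $\sigma$ also wins the concise antagonistic game.

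For the reverse direction I would reduce to Theorem~\ref{theo:concise.basic} by showing that any strategy $\sigma$ winning the concise antagonistic game also wins the concise \emph{basic} game. Along a play consistent with $\sigma$, write $\mathbf b^{\au}_i$ and $\mathbf b^{\up}_i$ for the concise antagonistic and concise basic state sequences (again starting at $\_, \ldots, \_$). A short induction establishes $\mathbf b^{\au}_i \sqsubseteq \mathbf b^{\up}_i$: at the step, monotonicity of $\au$ gives $\au(\mathbf b^{\au}_i, v_i) \sqsubseteq \au(\mathbf b^{\up}_i, v_i)$, and the observation that $\mathbf b^{\up}_i$ itself belongs to the set over which the minimum in $\au(\mathbf b^{\up}_i, v_i)$ is taken yields $\au(\mathbf b^{\up}_i, v_i) \sqsubseteq \up(\mathbf b^{\up}_i, v_i) = \mathbf b^{\up}_{i+1}$. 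Hence whenever the antagonistic state reaches $\won$, so does the basic state; $\sigma$ therefore wins the concise basic game, and Theorem~\ref{theo:concise.basic} yields a winning strategy in the parity game.

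The main obstacle is the asymmetry in Corollary~\ref{cor:up}: the basic update is preserved exactly by $\downarrow_1$, but the antagonistic update is preserved only up to a $\sqsupseteq$-inequality, and that inequality points in the useful direction for the first implication but the wrong direction for a direct back-translation. This asymmetry is exactly why the second implication detours through the concise basic game rather than trying to mirror the first implication against the classic antagonistic game; the monotonicity of $\au$ is then what compensates for the lack of a tight equality in the inductive step.
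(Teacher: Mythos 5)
Your proposal is correct and follows essentially the same route as the paper: the first implication chains Theorem~\ref{theo:basic} with Corollary~\ref{cor:up} and the monotonicity of $\au$, and the second reduces to Theorem~\ref{theo:concise.basic} via the observation that the antagonistic state always lies $\sqsubseteq$-below the basic state. The only (cosmetic) difference is that you phrase the second direction directly for player \emph{even} (a strategy winning the concise antagonistic game also wins the concise basic game), whereas the paper states the contrapositive from player \emph{odd}'s side; the inductive invariants you spell out are exactly the details the paper leaves implicit.
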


\begin{proof}
For the 'if' case, we observe that Corollary \ref{cor:up} implies with the monotonicity of $\au$ that, when even wins the classic antagonistic update game, he also wins the concise antagonistic update game (with the same strategy). Together with Theorem \ref{theo:basic}, this provides the `if' case. 

For the `only if' case, we observe that the monotonicity of $\au$ entails that, when odd wins the basic concise update game, then she wins the antagonistic update game (with the same strategy).
Together with Theorem \ref{theo:concise.basic}, this provides the `only if' case.
\qed
\end{proof}

\begin{corollary}
For a parity game with $e$ states of even colour and $k=\lfloor \log_2(e) \rfloor$ and $\mathbb W$ the space for witnesses of value $\leq e$, length $k+1$ and colours $C$,
both $\mathcal U=(\mathbb C;C;\_,\ldots,\_;\up;\won)$ and $\mathcal A=(\mathbb C;C;\_,\ldots,\_;\au;\won)$ are separating automata.\hspace*{-12pt}
\qed
\end{corollary}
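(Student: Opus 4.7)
The plan is to derive this corollary directly from Theorems~\ref{theo:concise.basic} and \ref{theo:concise.antagonistic}, which have already done the hard combinatorial work. The statement is essentially a repackaging: the theorems tell us that player \emph{even} wins the parity game from a vertex $v$ if and only if he wins the concise (basic/antagonistic) update game from $v$, and the corollary re-expresses this equivalence in the language of separating automata.

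First, I would verify the syntactic requirements of a separating automaton. Both $\mathcal U$ and $\mathcal A$ have state space $\mathbb C$ (finite, since witnesses have bounded length $k+1$ and values bounded by $e$), input alphabet $C$, a designated initial state $\_,\ldots,\_$, a deterministic transition function ($\up$ resp.\ $\au$), and the target sink state $\won$. Determinism of $\up$ is immediate from its definition via the raw update lemmas; determinism of $\au$ follows from the fact that the $\min_{\sqsubseteq}$ over the finite set $\mathbb C$ is uniquely determined. Both transition functions map $\won$ to $\won$, making $\won$ a sink as required.

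Next I would establish the two acceptance conditions. For a vertex $v$ that is winning for player \emph{even}, Theorem~\ref{theo:concise.basic} (for $\mathcal U$) and Theorem~\ref{theo:concise.antagonistic} (for $\mathcal A$) give even a strategy $\sigma$ to win the corresponding update game starting from $\_,\ldots,\_$. Since parity games are positionally determined and the update game likewise admits positional winning strategies (the state space is finite and the objective is reachability of $\won$), we may take $\sigma$ to be positional. Every play consistent with $\sigma$ drives the automaton to $\won$, hence is accepted. Conversely, if $v$ is winning for player \emph{odd}, the same theorems give odd a positional strategy $\tau$ such that no play consistent with $\tau$ reaches $\won$ under $\up$ (resp.\ $\au$), yielding rejection of all plays in $\Plays(v)$ consistent with $\tau$.

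The only step that requires more than bookkeeping is the positionality of the winning strategies in the update games. I would argue this by observing that the product of the parity arena with the finite automaton $\mathcal U$ (resp.\ $\mathcal A$) is itself a finite reachability game, for which positional determinacy is standard; a winning strategy in this product projects to a positional strategy on the original arena because the automaton state is a deterministic function of the play prefix. This is the only technical subtlety; everything else is a direct translation of the theorems into the separating automaton vocabulary.
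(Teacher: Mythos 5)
The paper treats this corollary as an immediate repackaging of Theorems~\ref{theo:concise.basic} and~\ref{theo:concise.antagonistic} (it is stated with a bare \textsf{qed} and no argument), and your overall route is the same: check the syntactic requirements of the automaton and translate the two theorems into the two acceptance clauses of the separating-automaton definition. That part of your write-up is fine and matches the paper's intent.

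However, the one step you single out as ``the only technical subtlety'' --- positionality --- is argued incorrectly. A positional winning strategy in the product of the arena with $\mathcal U$ (resp.\ $\mathcal A$) is a function of pairs $(v,\mathbf b) \in V \times \mathbb C$; projecting it to the original arena yields a \emph{finite-memory} strategy whose memory is the automaton state, not a positional strategy on $V$. The fact that the automaton state is a deterministic function of the play prefix is exactly why the memory is implementable, but it does not make the choice at a vertex $v$ independent of the history. Since the definition of a separating automaton in Section~\ref{sec:prelim} explicitly demands a positional strategy \emph{for the parity game}, your product-game argument does not deliver what is needed. The correct route is the one the paper implicitly relies on: parity games are memoryless determined (as recalled in the preliminaries), so the winning player already has a positional strategy $\sigma$ in the parity game itself, and the proofs behind Theorems~\ref{theo:concise.basic} and~\ref{theo:concise.antagonistic} (inherited from Theorem~\ref{theo:basic}) show that every play consistent with such a $\sigma$ is accepted (for \emph{even}) respectively never reaches $\won$ (for \emph{odd}). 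With that substitution your proof is complete; as written, the positionality clause is not established.
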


To give an intuition to their states, for $\mathcal U$ being in a state $\mathbf b \neq \won$ means that $\mathbf b$ is a witness for the play prefix, while $\won$ means that the play prefix contains an even chain of length $>e$, and thus an even cycle.

For $\mathcal A$ being in a state $\mathbf b$ means that there is a state $\mathbf c \sqsupseteq \mathbf b$ with this property.
\medskip

As a final remark, in the rare cases where even colours are scarce, their appearance in $\mathbb C$ can also be restricted: if only $e_\#$ states have an even colour $e$, then the number of occurrences of $e$ in a concise witness can be capped to $e_\#$, too, as more occurrences of $e$ without an intermediate occurrence of a higher colour would imply that an accepting cycle is in the word.

However, for this to reduce the statespace, $e_\# \leq \lfloor \log_2(e) \rfloor$ is required, and the closer it comes to $\lfloor \log_2(e) \rfloor$, the lesser is the saving. In particular, for $e_\# = \lfloor \log_2(e) \rfloor$, we would just safe a single state.

\section{Colour Witnesses}
\label{sec:colour}
In this section, we use the same data structure as before---the concise witnesses from the previous section---but adjusting its semantics.

We introduce two changes to the semantics of witnesses that accelerate the speed in which updates can be made.
We discuss them for concise witnesses $\mathbb C$.

Before formalising how we make our witnesses more flexible and how we use this to re-define the raw update function (and, through this, the update function and the antagonistic update), we describe on a number of examples how we change the semantics of witnesses.

\subsection{Motivating examples}
\paragraph{\bf Merging witnesses:}
If we consider the classic witness $\mathbf b = 4,\_,4,\_$, it referred to two $i$-witnesses that each end on a state with colour $4$, one of length $8$ and one of length $2$.

We will instead view this as a \emph{single} colour witness for colour $4$, which then refers to a single even chain of length \emph{at least} $10$.

As a consequence, when passing by a state with colour $6$, we can now update the witness to $6,\_,6,6$, as this would require a single even chain of at least length $11$ that ends in a $6$.

\paragraph{\bf Shifting witnesses:}
If we consider the witness $\mathbf b = 4,2,\_,\_$, it referred to two $i$-witnesses, where the first has length $8$ and ends on colour $4$, while the second has length $4$ and ends on colour $2$.

We will allow to make the latter sequence shorter, so long as the former sequence is extended accordingly.
For example, when the sequence that ends in $4$ has length $10$, then it would suffice if the sequence that ends in $2$ has length $2$.

Likewise, for $\mathbf b = 6,\_,4,2,2$, it would be allowed that the length of the even chain that ends in $6$ is $18$, the subsequent sequence that ends in $4$ is $3$, and the length of the sequence that ends in $2$ is $2$.
If the length of the sequences ending in $6$, $4$, and $2$ are $\ell_6$, $\ell_4$, and $\ell_2$, respectively, the constraints would be $\ell_6 \geq 16$, $\ell_6+\ell_4 \geq 20$, and $\ell_6+\ell_4+\ell_2 \geq 23$.

When passing by a state with colour $8$, we can now update the witness to $8,8,\_,\_,\_$, as this would require a single sequence of at least length $24$ that ends in an $8$.

\paragraph{\bf Blocked shifting:}
This shifting cannot be done through an odd colour: for $\mathbf b = 4,3,2,2$, the requirement for the rightmost sequence would be to be of length at least three and to end in a $2$.
It is, however, possible to shift some of the required length of the sequence that ends in $3$ to the sequence that ends in $4$:
if the length of the sequences ending in $4$ and $3$ are $\ell_4$ and $\ell_3$, respectively, the constraints would be $\ell_4 \geq 8$ and $\ell_4+\ell_3 \geq 13$. (Recall the odd witnesses need to be one position longer to contain an even chain of the same length.)  
Thus, reading a $6$ would lead to the witness $6,6,\_,6$.

\subsection{Colour Witness}
The biggest change is that an $i$-colour witness (\icw) refers to the \emph{colour} $i$, rather than to the position $b_i$ in the witness.
Consequently, we do not have a fixed length of an $i$-colour witness, and refer to the length of the witness for each colour $i$ that appears in a witness as $\ell_i$.

As before, we focus in our description on forward witnesses, with backward witnesses being defined accordingly.

\paragraph{\bf $i$-Colour witness (\icw) with value $\ell_i$}
Let $\rho = v_1, v_2, \dots, v_m$ be a prefix of a play of the parity game.
An \emph{even \icw} is a sequence of (not necessarily consecutive) positions
of $\rho$ 
\begin{equation*}
p_1, p_2, p_3, \dots, p_{\ell_i}
\end{equation*}
of length exactly $\ell_i$,
and an \emph{odd \icw}
is a sequence of (not necessarily consecutive) positions
of $\rho$ 
\begin{equation*}
p_0, p_1, p_2,\dots, p_{\ell_i} 
\end{equation*}
of length exactly $\ell_i+1$,
that satisfy the following properties:
\begin{itemize}
\item \textbf{Position:} each $p_j$ specifies a position in the play prefix $\rho$,
so each $p_j$ is an positive integer that satisfies $1 \le p_j \le m$.
\item \textbf{Order:} the positions are ordered. So we have $p_j < p_{j+1}$ for
all $j < \ell_i$.

\item {\bf Evenness:} 
all positions but the final one are even. Formally, for all $j < \ell_i$ the
colour $\phi(v_{p_{j}})$ of the vertex in position $p_{j}$ is even.

For position $p_{\ell_i}$, its colour $\phi(v_{p_{\ell_i}}) = i$. Then, the colour of that position is even for even \icw, and odd for odd \icw.

Note that this entails that an \icw\ has $\ell_i$ initial even positions that define an even chain of length $\ell_i$.

\item {\bf Inner domination:}  
the colour of every vertex between $p_j$ and $p_{j+1}$ is dominated by the colour
of $p_j$ \emph{or} the colour of $p_{j+1}$.
Formally, 
for all $j < \ell_i$, the colour of every vertex in the subsequence
$v_{p_{j}},v_{p_{j}+1},\ldots,v_{p_{j+1}}$ is less than or equal to 
$\max\big\{\phi(v_{p_{j}}),\phi(v_{p_{j+1}})\big\}$.

\item {\bf Outer domination:} 
the colour of the vertex $v_{p_{\ell_i}}$ in position $p_{\ell_i}$ is $i$, i.e.\ $i = \phi(v_{p_{\ell_i}})$.
Moreover, $i$ is greater than or equal to the colour of every vertex
that appears after position
$p_{\ell_i}$ in $\rho$.
Formally, for all $k$ in the range $p_{\ell_i} \le k \le m$, we
have that $\phi(v_{k}) \le i$.
\end{itemize}

\paragraph{\bf Colour witnesses}
Like a concise witness, a \emph{colour witness} is a sequence 
\begin{equation*}
b_k, b_{k-1}, \dots, b_1, b_0,
\end{equation*}
of length%
\footnote{$k = \lfloor \log_2(e)\rfloor$ again suffices, where $e$ is the number of vertices with an even colour} $k+1$, such that each element $b_i \in \cu$, and
that satisfies the following properties.

\begin{itemize}

\item \textbf{Properties of the sequence:}

defining \textbf{$i$-positions} as the positions in $\mathbf b$ that have value $i$,

$\pos(i,\mathbf b) = \{j \leq k \mid b_j = i \}$ for every $i \in C^-$, the sequence has to satisfy the following constraints:

\begin{itemize}
\item \textbf{order:} for $i>j$, we have that $b_i \geq b_j$ or $\_\in \{b_i,b_j\}$ holds; and

\item \textbf{conciseness:} for all odd $i \in C^-$, $\big|\pos(i,\mathbf b)\big| \leq 1$ and $b_0 \neq i$ hold.
\end{itemize} 

\item \textbf{Witnessing:}
\begin{itemize} 
\item \textbf{ordered witnesses:} for $i>j$ with $\pos(i,\mathbf b)\neq \emptyset$ and $\pos(j,\mathbf b)\neq \emptyset$, the $j$-witness starts after the
$i$-witness ends.
That is $p_{i, \ell_i} < p_{j, 1}$ if $j$ is even and $p_{i, \ell_i} < p_{j, 0}$ if $j$ is odd.
\item using the following definitions,
\begin{itemize}
\item \textbf{next odd colour:} $\odd(i,\mathbf b) = \inf\{j > i \mid j \mbox{ odd and }\pos(j,\mathbf b)\neq \emptyset\}$ defines the next higher odd colour than $i$ that occurs in the colour witness (note that $\odd(i,\mathbf b)= \infty$ if no such colour exists), 

\item \textbf{unblocked colours:} $\unblk(i,\mathbf b) = \{ j < \odd(i,\mathbf b) \mid j \geq i \mbox{ and }\pos(i,\mathbf b) \neq \emptyset\}$ is the set of all colours that are at least $i$, but strictly smaller than $\odd(i,\mathbf{b}$, and

\item \textbf{unblocked positions:} $\ubp(i,\mathbf b) = \bigcup\limits_{j \in \unblk(i,\mathbf b)}\pos(i,\mathbf b)$ is the set of positions labelled by an unblocked colour,
\end{itemize}
\medskip
we have that $\sum\limits_{\unblk(i,\mathbf b)} \ell_i \geq \sum\limits_{j \in \ubp(i,\mathbf b)}2^i$ holds for all $i \in C^-$.
\end{itemize}
\end{itemize} 

It should be noted that neither the \icw-s associated with each colour, nor the value of the $\ell_i$ are stored in a colour witness.
However, in order for a sequence to be a colour witness for an initial sequence of a run, the
corresponding \icw-s must \emph{exist}.

\subsection{Updating Colour Witnesses}

We now show how forward colour witnesses can be constructed incrementally by processing
the play one vertex at a time.
Throughout this subsection, we will suppose that
we have a play $\rho = v_0, v_1, \dots, v_{m}$, and a new vertex $v_{m+1}$ that
we would like to append to $\rho$ to create $\rho'$. 
We will use $d = \phi(v_{m+1})$ to denote the colour of this new vertex. 
We will suppose that $\mathbf b = b_k, b_{k-1}, \ldots, b_1,b_0$ is a colour witness for $\rho$,
and has \icw-s with individual lengths $\ell_i$.
We will construct a witness $\mathbf c = c_k, c_{k-1}, \ldots, c_1,c_0$ for $\rho'$ and discuss how its inferred \icw-s look like.

We present four lemmas that allow us to perform this task.

\begin{lemma}
\label{lem:newup.stale}
Suppose that $d\in C^-$ is odd and, for all $j \leq k$, either $b_j = \_$ or $b_j > d$.
If we set $c_i = b_i$ for all $i \leq k$, then $\mathbf c$ is a colour witness for $\rho'$.
\end{lemma}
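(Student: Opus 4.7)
The plan is to verify that reusing $\mathbf{c}=\mathbf{b}$ together with the old \icw-s (and their lengths $\ell_i$) still satisfies every clause of the colour-witness definition for the extended play $\rho'$. Because the sequence is literally unchanged, the sequence-level conditions (order and conciseness) are inherited from $\mathbf{b}$ being a colour witness of $\rho$; the work is entirely on the witnessing side, i.e.\ showing that the \icw-s still exist over the longer prefix $\rho'$.

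Fix a colour $i \in C^-$ with $\pos(i,\mathbf{b}) \neq \emptyset$ and reuse the very same positions $p_{i,1},\ldots,p_{i,\ell_i}$ (or $p_{i,0},\ldots,p_{i,\ell_i}$ if $i$ is odd) that witnessed $i$ in $\rho$. The position bound $p_j \leq m$ still gives $p_j \leq m{+}1$, and the order, evenness, and inner-domination clauses are internal to the chosen positions and so are not affected by appending $v_{m+1}$. The only clause that could fail is outer domination, which now also requires $\phi(v_{m+1}) \leq i$. By hypothesis, every non-$\_$ entry of $\mathbf{b}$ exceeds $d$, and since $b_j = i$ for some $j$, we obtain $i > d$ and thus $\phi(v_{m+1}) = d < i$. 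Hence every \icw\ from the original colour witness is still a valid \icw\ in $\rho'$ with the same length $\ell_i$.

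It then remains to check the two global witnessing clauses. The \emph{ordered witnesses} property compares positions of \icw-s for different colours; since none of these positions change, the inequalities $p_{i,\ell_i} < p_{j,1}$ (or $p_{i,\ell_i} < p_{j,0}$ for odd $j$) carry over verbatim. The \emph{sum inequality} $\sum_{j \in \unblk(i,\mathbf{b})} \ell_j \geq \sum_{j \in \ubp(i,\mathbf{b})} 2^j$ depends only on $\mathbf{b}$ and the $\ell_j$, both of which are inherited unchanged from $\rho$.

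I expect no real obstacle: the whole argument amounts to observing that a \emph{lower} colour appended at the end cannot violate outer domination of a witness whose outer colour is strictly greater, and that nothing else in the colour-witness definition is sensitive to appending a single new position. The only subtle point worth flagging is that the hypothesis $b_j > d$ (rather than merely $b_j \geq d$) is what makes the argument work uniformly for both even and odd witnessed colours $i$, since it guarantees $d \neq i$ and therefore $d < i$.
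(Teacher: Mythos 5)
Your proof is correct and follows essentially the same route as the paper's: keep $\mathbf c = \mathbf b$ with the same \icw-s, observe that only outer domination could be affected by appending $v_{m+1}$, and that $d < b_j$ for every non-$\_$ entry preserves it, while all other clauses are untouched. One small quibble with your closing remark: strictness of $b_j > d$ is not actually needed for this argument, since outer domination only requires $\phi(v_{m+1}) \leq i$; the strict hypothesis merely delimits this lemma from the case handled by Lemma~\ref{lem:oddup.local}.
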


\begin{proof}
	Since $d < b_j$ for all $j$, the outer domination of every $e$-colour witness
	implied by $\mathbf b$ is not changed.
	Moreover, no other property of any $e$-colour witness
	is changed by the inclusion of $v_{m+1}$ in the initial sequence, so by setting $\mathbf c = \mathbf b$
	we obtain a colour witness for $\rho'$.
	\qed
\end{proof}

Note that the proof of Lemma \ref{lem:newup.stale} does not use that $d$ is odd and holds similarly when $d$ is even; however, in that case Lemma \ref{lem:newup.overflow} provides a better update for the colour witness.

\begin{lemma}
\label{lem:oddup.local}
Suppose that $d\in C^-$ is odd, and there exists an index $j$ such that $b_j \neq \_$, $d \geq b_j$, and, for all $i>j$, either $b_i = \_$ or  $b_i > d$ hold.
Then setting:
\begin{itemize}
    \item $c_i = b_i$ for all $i > j$,
    \item $c_j = d$ if $j\neq 0$ and $c_j =\_$ if $j=0$, and
    \item $c_i = \_$ for all $i < j$
\end{itemize}
yields a colour witness for $\rho'$.
\end{lemma}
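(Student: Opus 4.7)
The plan is to verify directly that $\mathbf{c}$ as constructed satisfies all three clauses of the colour-witness definition: the sequence-level order and conciseness constraints, the existence of the required $i$-colour witnesses for each non-$\_$ entry, and the global sum inequality on chain lengths.

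First, I verify the sequence properties. Positions above $j$ are copied unchanged and, by hypothesis, satisfy $c_i = b_i > d$ or $c_i = \_$; position $j$ takes value $d$ (or $\_$ when $j = 0$); and positions below $j$ are all $\_$. Order follows immediately, since $d \geq b_j$ and every $c_i \neq \_$ with $i > j$ satisfies $c_i > d$. Conciseness holds because the odd colour $d$ now appears only at position $j$ of $\mathbf{c}$ (positions above have strictly greater colours, positions below are $\_$), any other odd colour above $j$ was already unique in $\mathbf{b}$ and is unchanged, and $c_0 = \_$ in both the $j = 0$ case (by the explicit split) and the $j > 0$ case (since $0 < j$).

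Second, I exhibit the required $i$-colour witnesses for each non-$\_$ entry of $\mathbf{c}$. For $i > j$ with $c_i \neq \_$, the original $b_i$-colour witness from $\mathbf{b}$ continues to work in $\rho'$: the only new vertex $v_{m+1}$ has colour $d < b_i$, so outer domination is preserved without altering any positions. For position $j$ with $c_j = d$ (when $j \neq 0$), I build the new $d$-colour witness by taking the first $\ell_{b_j}^{\mathrm{old}}$ even positions of the original $b_j$-witness and appending $m+1$ as the final odd position of colour $d$, giving an odd colour witness of value $\ell_d^{\mathrm{new}} = \ell_{b_j}^{\mathrm{old}}$. Inner domination between the last retained even position and $m+1$ follows from outer domination of the original $b_j$-witness, which forces every intervening vertex of $\rho$ to have colour $\leq b_j \leq d$, while the new endpoint has colour exactly $d$; outer domination for the new witness is trivial since $m+1$ is the last position of $\rho'$.

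Third, and most delicate, I verify the sum inequality
\[
\sum_{j' \in \unblk(i, \mathbf{c})} \ell_{j'}^{\mathrm{new}} \;\geq\; \sum_{p \in \ubp(i, \mathbf{c})} 2^{p}
\]
for every $i \in C^-$. The structural observation is that the newly written odd colour $d$ at position $j$ blocks shifting across it: for any $i \leq d$ we have $\odd(i, \mathbf{c}) \leq d$, so the unblocked colours sit strictly above $d$ and the unblocked positions sit strictly above $j$; the contribution of the erased position $j$ to the old right-hand side is exactly compensated by $\ell_d^{\mathrm{new}} = \ell_{b_j}^{\mathrm{old}}$ when $d$ is itself unblocked for some relevant $i$, and is simply absent from both sides otherwise. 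For $i > d$, neither side sees any position $\leq j$, and the inequality is inherited directly from $\mathbf{b}$.

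The main obstacle will be the bookkeeping in the third step: $\unblk(i, \mathbf{c})$ and $\ubp(i, \mathbf{c})$ depend on whether an odd colour strictly between $b_j$ and $d$ was already present in $\mathbf{b}$, and on whether $b_j$ itself is even or odd (including the boundary case $b_j = d$). I would case-split on these possibilities and, within each case, match every new constraint either to the trivial observation that both sides are restricted to positions strictly above $j$ and so inherit the old inequality verbatim, or to the old constraint for an appropriate outer index $i' \leq b_j$, using $\ell_d^{\mathrm{new}} = \ell_{b_j}^{\mathrm{old}}$ to absorb the contribution of the replaced position.
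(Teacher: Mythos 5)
Your proposal is correct and follows essentially the same route as the paper: keep the colour witnesses for the entries above $j$ unchanged, turn the $b_j$-colour witness into a $d$-colour witness of the same value $\ell_d = \ell_{b_j}$ by replacing its final odd vertex with (or, if $b_j$ is even, appending) $m+1$, and observe that the length constraints transfer because the new odd colour $d$ blocks every colour at or below it. One small slip in your second step: when $b_j$ is odd, the vertices lying between the last retained even position and the discarded old endpoint are \emph{not} covered by outer domination and need not have colour $\le b_j$; you need the old witness's inner domination there, which bounds them by $\max\{\phi(v_{p_{\ell_{b_j}-1}}), b_j\}$, and this is still within the required new bound $\max\{\phi(v_{p_{\ell_{b_j}-1}}), d\}$, so the conclusion stands.
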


\begin{proof}
For all $e > d$, the $e$-colour witness (if any) implied by $\mathbf b$ can be kept: the outer domination of every such $e$-colour witness implied by $\mathbf b$ is not changed.
Moreover, no other property of any such $e$-colour witness is changed by the inclusion of $v_{m+1}$ in the initial sequence.

For the $b_j$-colour witness, we either update the last vertex to $m+1$ (if $b_j$ is odd) or append $m+1$ to the it (if $b_j$ is even).
In both cases, the inner domination rules are valid (due to the inner and outer domination rules for the $b_j$-colour witness) and the outer domination rule holds trivially.
Moreover, the side constraints for the length carry over from those for $\mathbf b$ (when $b_j$ is odd), for $\ell_d$, by adding one to the length constraint while also appending one state (when $b_j$ is even).

So, $\mathbf c$ is a colour witness for $\rho'$.
\qed
\end{proof}

\begin{lemma}
\label{lem:evenup.local}
Suppose that $d$ is even, there exists a maximal index $j$ such that $b_j<d$, and $b_j$ is odd. Then setting:
\begin{itemize}
    \item for all $i \geq j$, $c_i = d$ if $b_i < d$ and $c_i = b_i$ otherwise,
    \item for all $j > i \geq 1$, $c_i = \_$, and
    \item $c_0 = d$.
\end{itemize}
yields a colour witness for $\rho'$.
\end{lemma}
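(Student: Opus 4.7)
The plan is to verify that $\mathbf{c}$ satisfies every clause of the colour witness definition for $\rho'$, in four steps: well-formedness of the sequence, inheritance of the $e$-colour witnesses for $e > d$, construction of a new $d$-colour witness ending at $v_{m+1}$, and verification of the $d$-length constraint.

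The first two steps are routine. Since $\mathbf{c}$ agrees with $\mathbf{b}$ strictly above $j$, and the only non-$\_$ values at indices $\leq j$ are the even colour $d$ (at positions $0$ and $j$), and since the maximality of $j$ together with the order property of $\mathbf{b}$ forces every $b_i$ with $i > j$ to be $\geq d$ or $\_$, the order in $\mathbf{c}$ is correct; conciseness holds because only an even colour is newly introduced and the former odd occurrence of $b_j$ is erased; and $c_0 = d$ is even. I also record $j \geq 1$, since $b_0$ cannot be odd by conciseness. Any colour $e > d$ still in $\mathbf{c}$ appears only at indices $i > j$ with $c_i = b_i$, so its witness is inherited unchanged and its outer domination survives because $\phi(v_{m+1}) = d \leq e$.

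I would then construct the new $d$-colour witness by concatenating, in play order: the even positions of the existing $d$-colour witness in $\mathbf{b}$ (of length $\ell_d^\text{old}$, taken to be $0$ if $d$ does not occur in $\mathbf{b}$), the even positions of each distinct $e$-colour witness in $\mathbf{b}$ with $e \leq b_j$ (these appear later in $\rho$ by the ordered witnesses property), and finally $v_{m+1}$. The key observation---the intuition behind the ``blocked shifting'' example---is that the odd blocker $b_j < d$ is now bypassed: beyond the old $d$-witness's terminal every colour of $\rho$ is $\leq d$ (by its outer domination), and beyond the $b_j$-witness's terminal every colour is $\leq b_j < d$ (by that witness's outer domination), so inner domination of the new $d$-witness holds; outer domination at $v_{m+1}$ is immediate and every chosen position has even colour. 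The resulting length is $\ell_d^\text{new} = \ell_d^\text{old} + \sum_{e \leq b_j,\ e \text{ in } \mathbf{b}} \ell_e$.

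The hard part is the length constraint for $d$ in $\mathbf{c}$. Writing $S = \{i > j \mid b_i = d\}$ so that $\pos(d,\mathbf{c}) = \{0, j\} \cup S$, and noting $\odd(d,\mathbf{c}) = \odd(b_j,\mathbf{b})$ because no odd colour of $\mathbf{b}$ lies in $(b_j, d)$, the contributions of colours strictly above $d$ in $\unblk(d,\mathbf{c})$ cancel on both sides of the $\mathbf{c}$-constraint via the $\mathbf{b}$-constraint of the smallest such colour. This reduces the goal to $\ell_d^\text{new} \geq 2^0 + 2^j + \sum_{i \in S} 2^i$. The same cancellation applied to the $\mathbf{b}$-constraint of $d$ itself yields $\ell_d^\text{old} \geq \sum_{i \in S} 2^i$, so it suffices to prove $\sum_{e \leq b_j,\ \text{distinct in } \mathbf{b}} \ell_e \geq 2^j + 2^0$. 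I would obtain this by telescoping the $\mathbf{b}$-constraints layer by layer between consecutive odd colours occurring among $b_0,\dots,b_j$: after cancelling the higher-colour terms of each layer's constraint, every odd $b_{s_l}$ contributes $\geq 2^{s_l}$ and every intervening even layer contributes its sum of $2^i$'s, aggregating to $\sum_{i=0}^{j} 2^i = 2^{j+1}-1 \geq 2^j + 2^0$, where the last inequality uses $j \geq 1$. Combining both bounds closes the inequality.
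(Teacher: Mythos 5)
Your construction of the new $d$-colour witness assembles the wrong pieces, and this breaks both the domination argument and the length argument. First, $\pos(d,\mathbf c)$ is not $\{0,j\}\cup\{i>j\mid b_i=d\}$: the update sets $c_i=d$ for \emph{every} $i>j$ with $b_i<d$, so every position carrying an even colour strictly between $b_j$ and $d$ is promoted to $d$ and contributes its $2^i$ to the right-hand side of the $d$-constraint in $\mathbf c$. Your chain omits exactly the witnesses of those colours, so it can be too short: in the paper's own ``blocked shifting'' example $\mathbf b=4,3,2,2$ with $d=6$, the constraints of $\mathbf b$ ($\ell_4\ge 8$, $\ell_3+\ell_4\ge 12$, $\ell_2\ge 3$) permit $\ell_4=12$, $\ell_3=0$, $\ell_2=3$, and then your chain has length $0+0+3+1=4$ against the requirement $2^3+2^2+2^0=13$ for $\mathbf c=6,6,\_,6$. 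Second, the pieces you do include---the witnesses of the colours $e<b_j$---are both unnecessary and harmful: dropping the odd terminal of the $b_j$-witness (and of any odd colour below it) leaves a vertex of odd colour $b_j$ sitting \emph{between} two retained chain positions whose colours may both be smaller than $b_j$, and inner domination is a local condition on consecutive chain positions, not the global bound ``$\le d$'' that your outer-domination argument supplies. The correct chain concatenates the witnesses of the colours in $[b_j,d]$ occurring in $\mathbf b$ (the old $d$-witness, the even colours in $(b_j,d)$, and the $b_j$-witness, in decreasing colour order) and \emph{replaces} the single odd terminal, which is the last position of the whole concatenation, by $m+1$; colours below $b_j$ are simply discarded.

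The length argument also rests on an invalid ``cancellation''. The witnessing constraints only bound \emph{sums} of the form $\sum_{e\in\unblk(i,\mathbf b)}\ell_e$ from below; from $\sum_{e\in L}\ell_e\ge X$ and $\sum_{e\in L'}\ell_e\ge X'$ with $L'\subset L$ one cannot conclude $\sum_{e\in L\setminus L'}\ell_e\ge X-X'$, because the colours in $L'$ may absorb all the slack. The whole point of ``shifting'' is that an individual $\ell_e$ may be $0$, so neither $\ell_d^{\mathrm{old}}\ge\sum_{i\in S}2^i$ nor ``every odd $b_{s_l}$ contributes $\ge 2^{s_l}$'' follows (in the example above, $\ell_3=0<2^2$). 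No telescoping is needed with the correct chain: the colours of $\mathbf c$ above $d$ that are unblocked for $d$ coincide with those of $\mathbf b$ above $d$ that are unblocked for $b_j$, one checks $\ubp(d,\mathbf c)=\ubp(b_j,\mathbf b)\cup\{0\}$, and the new $d$-witness has length $1+\sum_e\ell_e$ summed over the colours of $[b_j,d]$ in $\mathbf b$, the $+1$ coming from the converted terminal. Hence the single $\mathbf b$-constraint for the colour $b_j$ yields the $\mathbf c$-constraint for $d$ directly, the extra $1$ on the left paying exactly for the new $2^0$ on the right. The sound parts of your write-up are the well-formedness of $\mathbf c$, the inheritance of the witnesses for colours above $d$, and the observation that $j\ge 1$.
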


\begin{proof}
We simply append all \icw-s that exist for $\mathbf b$ in the interval $i \in \{b_j,\ldots,d\}$.
We append them in the given order (from the largest $i$ to the the lowest, $b_j$), and then replace the last index (which is from the $b_j$-covit) by $m+1$.

The inner domination rules are valid (due to the inner and outer domination rules for the \icw-s involved, and by $b_j < d$. The outer domination rule trivially holds.

The only new rule to be considered is the rule on the joint length of the \icw-s in $\ubp(d,\mathbf c)$, but this is the same length (as only the last element is changed) and the same constraint as for the sum of the length of the \icw-s in $\ubp(b_j, \mathbf b)$. 
\qed
\end{proof}

\begin{lemma}
\label{lem:newup.overflow}
Suppose that $d$ is even and there is no index $j'$ such that $b_{j'}<d$ and $b_{j'}$ is odd.
Let $j$ be the maximal index (which might be $0$) such that:
\begin{itemize}
    \item for all $i > j$, $b_i$ is even, $b_i=\_$, or $b_i>d$;
    \item either $b_j = \_$, or $b_j > d$ and $b_j$ is odd; and
    \item for all $i < j$, $b_i$ is even.
\end{itemize}
If we set:
\begin{itemize}
    \item $c_i = b_i$ for all $i > j$ with $b_i>d$ or $b_i=\_$
    \item $c_i = d$ for all $i>j$ with $b_i \leq d$ (and thus even),
    \item $c_j = d$, and
    \item for all $i < j$, $b_j = \_$,
\end{itemize}
then $\mathbf{c}$ is a colour witness for $\rho'$.
\end{lemma}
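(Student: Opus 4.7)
The plan is to verify, in turn, each clause of the colour-witness definition for $\mathbf{c}$ on $\rho'$: monotonicity of the sequence, conciseness of the odd entries, existence of a valid \icw\ for each non-$\_$ entry together with the ordered-witnesses condition, and the length constraints. The first three are essentially bookkeeping given the construction; the delicate part is the length constraint, which carries almost all the content of the proof.

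Since $d$ is even, the update introduces no new odd colour and removes at most one (the occurrence at position $j$, if $b_j$ was odd), so conciseness is preserved. Monotonicity of $\mathbf{c}$ follows because $d$ replaces only $\_$ or even entries $\leq d$, while the entries kept at $i > j$ are all $> d$ or $\_$. For each $e > d$ with $\pos(e,\mathbf{c}) \neq \emptyset$ one has $\pos(e,\mathbf{c}) = \pos(e,\mathbf{b})$, and I reuse the old $e$-\icw\ verbatim: its outer domination extends to $\rho'$ because $\phi(v_{m+1}) = d < e$, and its length $\ell_e$ is inherited unchanged. To construct the new $d$-\icw\ I concatenate, in order of decreasing end-colour, the old \icw-s of $\mathbf{b}$ for every even $e \leq d$ with $\pos(e,\mathbf{b}) \neq \emptyset$, and finally append $v_{m+1}$. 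By the ordered-witnesses property of $\mathbf{b}$ these components all lie in $\rho$ after the last \icw\ for any colour $> d$; inner domination across the gaps follows from the outer-domination bound of each adjoining component (whose end-colours are all $\leq d$), within each component from its own validity, and outer domination at $v_{m+1}$ is vacuous as $v_{m+1}$ is the last vertex of $\rho'$. The result is a valid even $d$-\icw\ of length
$\ell_d^{\text{new}} = 1 + \sum_{e\text{ even},\,e\leq d,\,\pos(e,\mathbf{b})\neq\emptyset} \ell_e^{\text{old}}$.

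The main obstacle is the length inequality $\sum_{f \in \unblk(d,\mathbf{c})} \ell_f^{\text{new}} \geq \sum_{q \in \ubp(d,\mathbf{c})} 2^q$. Setting $\ell_e^{\text{new}} = \ell_e^{\text{old}}$ for every $e > d$ reused within the chunk, the left-hand side differs from the corresponding sum in $\mathbf{b}$ only by replacing $\sum_{e\text{ even},\,e\leq d}\ell_e^{\text{old}}$ with $\ell_d^{\text{new}}$, a gain of exactly $+1$; the right-hand side differs by $+2^j$ (from the newly $d$-labelled position $j$) and by losing the $2^i$ contributions for positions $i < j$ whose $b_i$ lay in the old chunk but is now erased to $\_$. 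I would close the argument by case analysis on $b_j$: when $b_j = \_$, the old chunk already satisfied a constraint that, combined with the structural observation that positions $i < j$ carry only even values (and hence were themselves paid for by \icw-s now freed up), leaves more than $2^j - 1$ of slack for the new position $j$; when $b_j$ is odd, removing $b_j$ also deletes $\ell_{b_j}^{\text{old}}$ from its own unblocked chunk while simultaneously merging that chunk with the one containing $d$ up to the next odd colour above, and the union's combined old constraint supplies the required budget. Chunks of $\unblk$ lying strictly above $\odd(d,\mathbf{c})$ are untouched by the update and inherit their constraints directly, which completes the verification.
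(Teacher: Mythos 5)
Your overall route is the one the paper takes: you build the new $d$-\icw\ by concatenating old \icw-s in decreasing order of end-colour and appending $v_{m+1}$, check evenness, order, conciseness and the two domination conditions exactly as the paper does, and isolate the length inequality as the real content. Your explicit handling of the case where $b_j$ is an odd colour above $d$ --- where erasing $b_j$ merges its unblocked chunk with the one above it, so that the new constraint for $d$ has to be assembled from two old constraints --- is a point the paper's one-line length argument passes over in silence, and you are right to flag it.

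There is, however, a genuine gap, which you share with the paper's own proof: both of you absorb into the new $d$-\icw\ only the old \icw-s for even colours $e \leq d$. But the positions $i<j$ that are erased to $\_$ are only required to be \emph{even}; their colours may exceed $d$. Take $\mathbf b = \_,6,2$ and $d=4$: the maximal admissible index is $j=2$ (with $b_2=\_$ and $b_1=6$, $b_0=2$ both even), so $\mathbf c = 4,\_,\_$ and the constraint for colour $4$ demands $\ell_4 \geq 2^2 = 4$, whereas your construction yields only $\ell_4 = \ell_2+1$, and the constraints on $\mathbf b$ permit $\ell_2=1$, $\ell_6=2$. Your accounting (``the left-hand side gains exactly $+1$'') breaks precisely here, because $\ell_6$ silently drops out of the left-hand side once colour $6$ no longer occurs in $\mathbf c$, while the right-hand side still owes $2^j$. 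The repair is to absorb into the new $d$-\icw\ \emph{every} old \icw\ whose colour ceases to occur in $\mathbf c$, i.e.\ also the even colours above $d$ all of whose positions lie below $j$. This is legitimate: the order property of $\mathbf b$ forces any such colour to be at most every colour occurring at a position above $j$, so by the ordered-witnesses property its \icw\ lies in $\rho$ after every retained \icw, and its end-colour is even, so it can serve as a block of inner positions of the new $d$-\icw. With that amendment the left-hand side really does gain exactly one over the old constraint for the lowest colour, the right-hand side changes by $2^j-\sum_{i<j}2^i = 1$ plus unchanged terms, and your case analysis goes through.
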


\begin{proof}
We simply append all \icw-s that exist for $\mathbf b$ in the interval $i \in \{2,\ldots,d\}$.
We append them in the given order (from the largest $i$ to the the lowest), and then append $m+1$.

The inner domination rules are valid (due to the inner and outer domination rules for the \icw-s involved. The outer domination rule trivially holds.

The only new rule to be considered is the rule on the joint length of the \icw-s in $\ubp(d,\mathbf c)$, but this is one more than the length (as only the last element is appended) and the same constraint as for the sum of the length of the \icw-s in $\ubp(2, \mathbf b)$.
\qed
\end{proof}

Again, if $d = \max\{C\}$ and odd, then the raw update of the colour witness is $\_,\ldots,\_$, which is a colour witness for every play prefix.
\medskip

When we want to update a witness upon
scanning another state $v_{m+1}$ with colour $d = \phi(v_{m+1})$, we can apply the update rule from one of the Lemmas \ref{lem:newup.stale} through \ref{lem:newup.overflow}.

For a given witness $\mathbf b$ and a vertex $v_{m+1}$, we denote with
\begin{itemize}
 \item \textbf{Raw update:} $\ru_+(\mathbf b,d)$ the raw update of the witness to $\mathbf c$, as
obtained by the update rules described above.
 \item \textbf{Update:} $\up_+(\mathbf b,d)$ is either $\ru_+(\mathbf b,d)$ if
$\val\big(\ru(\mathbf b,d)\big) \leq e$ (where $e$ is the number of
vertices with even colour), or $\up_+(\mathbf b,v_{m+1}) = \won$ otherwise.

In particular, $\up_+(\won,d) = \won$ for all $d \in C$.

\item \textbf{Antagonistic update:} $\au_+(\mathbf b,v) = {\min}_{\sqsubseteq}\big\{\up_+(\mathbf c,v) \mid \mathbf b {\sqsubseteq} \mathbf c \in \mathbb C \big\}$.
\end{itemize}

We first observe that $\up_+$ is indeed `faster' than $\up$ in that it always leads to a better (w.r.t.\ $\sqsubseteq$) state:
\begin{lemma}
For all $\mathbf b \in \mathbb C$ and all $d\in C$, $\up_+(\mathbf b,d) \sqsupseteq \up(\mathbf b,d)$.
\qed
\end{lemma}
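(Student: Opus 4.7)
The plan is to prove $\up_+(\mathbf b, d) \sqsupseteq \up(\mathbf b, d)$ by case analysis on the parity of $d$, and within each, on the structure of $\mathbf b$ that dictates which lemma applies for $\up$ (Lemmas~\ref{lem:up.overflow}, \ref{lem:up.local}, \ref{lem:up.stale}) and for $\up_+$ (Lemmas~\ref{lem:newup.stale}, \ref{lem:oddup.local}, \ref{lem:evenup.local}, \ref{lem:newup.overflow}). The unifying observation is that whenever $\up$ places $d$ at some pivot position $j$ and wipes everything below to $\_$, the rule $\up_+$ places $d$ at the same or at a lower pivot and additionally fills intermediate positions with $d$ rather than erasing them. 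Since $d \succeq \_$, this yields pointwise $\succeq$-dominance, and hence $\sqsupseteq$.

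For odd $d$, I expect to establish the stronger equality $\up_+(\mathbf b, d) = \up(\mathbf b, d)$. If $d = \max C$, both sides reset to $\_,\ldots,\_$. Otherwise I split on whether $\mathbf b$ has a non-$\_$ entry at most $d$. In the stale sub-case (all non-$\_$ entries strictly exceed $d$), Lemmas~\ref{lem:up.stale} and~\ref{lem:newup.stale} both leave $\mathbf b$ unchanged. In the remaining sub-case, pick $j^* = \max\{i : b_i \neq \_,\ b_i \leq d\}$. Lemma~\ref{lem:oddup.local} applies at $j^*$ for $\up_+$; for $\up$, either Lemma~\ref{lem:up.stale} (when $b_{j^*} = d$ and all positions below are already $\_$) or Lemma~\ref{lem:up.local} followed by $\downarrow_1$ (to remove the duplicate $d$ at $j^*$ when some $b_i = d$ sits above) produces the same sequence. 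Conciseness of $\mathbf b$ -- each odd colour appears at most once and never at position $0$ -- together with monotonicity, drives the match.

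For even $d$, the rules genuinely differ and I will exhibit a position where $\up_+$ strictly improves on $\up$. If no odd $b_j < d$ exists, $\up_+$ uses Lemma~\ref{lem:newup.overflow}, while $\up$ uses Lemma~\ref{lem:up.overflow} or Lemma~\ref{lem:up.local} with a potentially different pivot. Comparing position by position, both rules agree on all positions above their respective pivots and both write $d$ at their pivot; however, at each intermediate even position $i$ with $b_i \leq d$, $\up_+$ writes $d$ whereas $\up$ writes $\_$, which secures $d \succeq \_$ dominance at every such index. If some odd $b_{j'} < d$ exists with $j'$ maximal, Lemma~\ref{lem:evenup.local} fires for $\up_+$ and Lemma~\ref{lem:up.local} for $\up$ at the same pivot $j'$ (nonzero by conciseness); the two outputs coincide on positions $\geq j'$, but $\up_+$ writes $d$ at position $0$ while $\up$ writes $\_$.

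The main obstacle I anticipate is pinning down precisely which classical lemma $\up$ invokes when several apply simultaneously, particularly for even $d$ where both Lemmas~\ref{lem:up.overflow} and~\ref{lem:up.local} can be eligible. My plan is to adopt the convention that $\up$ selects the applicable lemma maximising the resulting witness in $\sqsubseteq'$, and to verify in each sub-case that the chosen output remains dominated by $\up_+(\mathbf b, d)$. A secondary subtlety is the interaction with $\downarrow_1$: since it only replaces odd entries by $\_$, and $\_$ is the least element of $\succeq$, truncation can only weaken $\up$'s raw output and hence cannot flip the inequality. Finally, the boundary behaviour with the $\won$-sink is trivial: if the raw update of $\up_+$ exceeds $e$ in value, both sides cap to $\won$, and if only $\up_+$ exceeds $e$, then $\up_+ = \won \sqsupseteq \up$ by definition of $\won$.
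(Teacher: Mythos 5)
Your overall strategy---a rule-by-rule comparison of the raw updates establishing pointwise $\succeq$-dominance at every position (with equality in the odd-$d$ cases)---is precisely the ``easy to check by the raw update rules'' argument that the paper leaves implicit, and its core is sound: wherever $\up$ writes $d$ or erases to $\_$, the rules behind $\up_+$ write $d$ at the same position or keep/upgrade an entry, and pointwise $\succeq$ implies lexicographic $\sqsupseteq$. One detail to tidy: in the even-$d$ case with a maximal odd $b_{j'}<d$, the two outputs need not coincide on all positions $\geq j'$, since Lemma \ref{lem:evenup.local} also overwrites even entries $b_i<d$ at positions $i>j'$ with $d$ where the classic rule leaves them untouched or blanks them; this only strengthens the dominance, so simply carry the pointwise comparison through rather than asserting coincidence.

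The one genuine hole is the treatment of the $\won$-cap. Your sentence ``if the raw update of $\up_+$ exceeds $e$ in value, both sides cap to $\won$'' is false as stated, and your case split omits the only dangerous configuration, namely $\val\big(\ru'(\mathbf b,d)\big)>e\geq\val\big(\ru_+(\mathbf b,d)\big)$, in which $\up(\mathbf b,d)=\won$ would strictly dominate $\up_+(\mathbf b,d)$ and the lemma would fail. You must rule this case out explicitly: from the pointwise $\succeq$-dominance of the raw updates you get $\even\big(\ru_+(\mathbf b,d)\big)\supseteq\even\big(\ru(\mathbf b,d)\big)$, and a short check of the $\evenodd$ definition (using that the contribution $2^{o}$ of a leading odd position exceeds the total contribution $2^{o}-1$ of all positions below it) shows that $\val$ is monotone with respect to pointwise $\succeq$; hence $\val\big(\ru_+(\mathbf b,d)\big)\geq\val\big(\ru(\mathbf b,d)\big)$ and the classic update can never be capped to $\won$ unless the colour-witness update is capped as well. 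With that observation added, your case analysis closes.
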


This is easy to check by the raw update rules, and it entails:

\begin{corollary}
For all $\mathbf b \in \mathbb C$ and all $d\in C$, $\au_+(\mathbf b,d) \sqsupseteq \au(\mathbf b,d)$.
\qed
\end{corollary}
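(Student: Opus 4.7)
The plan is to derive this from the preceding lemma pointwise and then lift the inequality through the minimum that defines the antagonistic update. Both $\au_+(\mathbf b, d)$ and $\au(\mathbf b, d)$ are defined as a $\sqsubseteq$-minimum ranging over the \emph{same} index set $\{\mathbf c \in \mathbb C \mid \mathbf b \sqsubseteq \mathbf c\}$, and the preceding lemma tells us that, coordinate by coordinate across this index set, $\up_+(\mathbf c, d) \sqsupseteq \up(\mathbf c, d)$. So the work reduces to the monotonicity of $\min_\sqsubseteq$ under such a pointwise comparison.

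First, I would fix an arbitrary $\mathbf c \in \mathbb C$ with $\mathbf b \sqsubseteq \mathbf c$ and invoke the preceding lemma to obtain $\up_+(\mathbf c, d) \sqsupseteq \up(\mathbf c, d)$. Next, since $\au(\mathbf b, d) \sqsubseteq \up(\mathbf c, d)$ by its definition as the minimum, transitivity of $\sqsubseteq$ gives $\au(\mathbf b, d) \sqsubseteq \up_+(\mathbf c, d)$. Because this holds for every such $\mathbf c$, the element $\au(\mathbf b, d)$ is a lower bound for the set $\{\up_+(\mathbf c, d) \mid \mathbf b \sqsubseteq \mathbf c \in \mathbb C\}$, and hence lies below its $\sqsubseteq$-minimum, which is precisely $\au_+(\mathbf b, d)$. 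That chain of inequalities yields $\au(\mathbf b, d) \sqsubseteq \au_+(\mathbf b, d)$, which is the claim.

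The only subtle point worth checking is that both minima are taken over the same family of candidates $\mathbf c$, so there is no mismatch between the two minimisations; the argument does not need any form of monotonicity of $\up_+$ or $\up$ in $\mathbf c$, only the pointwise inequality. For this reason, I do not expect any real obstacle: the corollary is essentially bookkeeping around the previous lemma, and the proof in LaTeX is a one-line chain of $\sqsubseteq$-inequalities followed by a remark that taking the minimum on both sides preserves the relation. If desired, the write-up can also remark that the same argument would give $\au_+(\mathbf b,d) \sqsupseteq \au(\mathbf b,d)$ whenever one replaces $\up_+$ by any update operator dominating $\up$ in the $\sqsubseteq$ order on $\mathbb C$.
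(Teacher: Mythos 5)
Your argument is correct and is exactly the bookkeeping the paper leaves implicit (the corollary is stated with a bare $\qed$ and no written proof): the pointwise domination $\up_+(\mathbf c,d)\sqsupseteq\up(\mathbf c,d)$ from the preceding lemma makes $\au(\mathbf b,d)$ a lower bound of $\{\up_+(\mathbf c,d)\mid \mathbf b\sqsubseteq\mathbf c\in\mathbb C\}$, and a lower bound lies below the minimum of a nonempty finite set in the total order $\sqsubseteq$, which is precisely $\au_+(\mathbf b,d)$. The one caveat is notational: the paper's definition of $\au$ writes its index set as $\{\mathbf c\mid \mathbf b\sqsupseteq\mathbf c\}$ whereas $\au_+$ uses $\{\mathbf c\mid \mathbf b\sqsubseteq\mathbf c\}$, but this is evidently a typo (the classic $\au'$ also minimises over $\mathbf b\sqsubseteq'\mathbf c$), so your assumption that both minima range over the same candidate set is the intended reading.
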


\begin{theorem}
\label{theo:fast}
The following three claims are equivalent for both, forward and backward witnesses:
\begin{enumerate}
    \item player \emph{even} has a strategy to win the parity game,
    \item player \emph{even} has a strategy to win the fast basic update game (using $\up_+$), and
    \item player \emph{even} has a strategy to win the fast antagonistic update game  (using $\au_+$).
\end{enumerate}
\end{theorem}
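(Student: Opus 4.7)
My plan is to prove the cycle of implications $(1) \Rightarrow (3) \Rightarrow (2) \Rightarrow (1)$, thereby avoiding a separate argument for each pairwise equivalence and leveraging the concise-witness theorems already established.

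For $(1) \Rightarrow (3)$, I would invoke Theorem~\ref{theo:concise.antagonistic} to obtain a winning even-strategy $\sigma$ in the concise antagonistic update game (with $\au$), and then transfer $\sigma$ unchanged to the fast antagonistic game (with $\au_+$). The key invariant is that, playing $\sigma$ against any fixed odd-strategy, the state $\mathbf b^+_n$ produced after $n$ moves by $\au_+$ satisfies $\mathbf b^+_n \sqsupseteq \mathbf b_n$, where $\mathbf b_n$ is the state produced by $\au$ on the same play. The inductive step combines the corollary $\au_+(\mathbf c,d) \sqsupseteq \au(\mathbf c,d)$ with the monotonicity of $\au_+$: a larger argument $\mathbf b$ shrinks the set $\{\mathbf c \sqsupseteq \mathbf b\}$ over which the minimum is taken, so the minimum grows. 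Since $\won$ is $\sqsupseteq$-maximal, the fast game reaches $\won$ whenever the concise game does.

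For $(3) \Rightarrow (2)$, I claim that the same strategy $\sigma$ that wins in the fast antagonistic game also wins in the fast basic game. Fixing an arbitrary odd-strategy $\tau$ and running $(\sigma,\tau)$ in both games in parallel, I would show inductively that the basic-game witness dominates the antagonistic-game witness: if $\mathbf b^{\up} \sqsupseteq \mathbf b^{\au}$ at step $n$, then at step $n+1$ we get $\up_+(\mathbf b^{\up},d) \sqsupseteq \au_+(\mathbf b^{\au},d)$, because $\up_+(\mathbf b^{\up},d)$ lies in the set $\{\up_+(\mathbf c,d) \mid \mathbf c \sqsupseteq \mathbf b^{\au}\}$ whose minimum defines $\au_+(\mathbf b^{\au},d)$. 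Hence reaching $\won$ in the antagonistic run forces reaching $\won$ in the basic run.

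For $(2) \Rightarrow (1)$, I would adapt Lemma~\ref{lem:correct} to colour witnesses. If the fast basic game reaches $\won$ along a play forced by some even-strategy, then along some prefix of that play the raw update $\ru_+$ produces a colour witness with $\val > e$. Using the length constraint $\sum_{i\in\unblk(\cdot,\mathbf b)}\ell_i \geq \sum_{j\in\ubp(\cdot,\mathbf b)}2^j$ in the definition of colour witnesses, the unblocked \icw-s aggregate into an even chain whose total length is at least $\val(\mathbf b) > e$. Since only $e$ vertices carry an even colour, this chain must revisit some even vertex; and by the inner domination of the chain, the segment between the two visits is dominated by an even colour, yielding an even cycle that is winning for \emph{even}. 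Hence even has a strategy forcing such a cycle and wins the parity game from the starting position.

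The main obstacle will be the colour-witness analogue of Lemma~\ref{lem:correct} in step~$(2) \Rightarrow (1)$: unlike classic $i$-witnesses whose length $2^i$ is fixed, the $\ell_i$ in a colour witness are only constrained collectively via the unblocked-positions inequality, so I need a careful accounting across unblocked segments, together with the ordered-witnesses property, to concatenate their even subsequences into a single even chain of length at least $\val(\mathbf b)$ rather than merely many short ones.
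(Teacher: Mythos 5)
Your proposal is correct and follows essentially the same route as the paper's proof: the same cyclic decomposition $(1)\Rightarrow(3)\Rightarrow(2)\Rightarrow(1)$, with $(1)\Rightarrow(3)$ via Theorem~\ref{theo:concise.antagonistic} plus the dominance of $\au_+$ over $\au$ and monotonicity, $(3)\Rightarrow(2)$ via $\up_+$ dominating $\au_+$, and $(2)\Rightarrow(1)$ via a colour witness of value $>e$ yielding an even chain longer than $e$ and hence an even-dominated cycle. You merely spell out in more detail the inductive domination invariants and the chain-concatenation step that the paper asserts directly from Lemmas~\ref{lem:newup.stale} through~\ref{lem:newup.overflow}.
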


\begin{proof}
(1) implies (3): By Theorem \ref{theo:concise.antagonistic}, that player \emph{even} wins the parity game entails that he wins the concise antagonistic update game.
As $\au_+$ provides (not necessarily strictly) better updates (w.r.t.\ $\sqsubseteq$) than $\au$, and by the antagonistic update being monotone by definition, this entails (3).

(3) implies (2): as $\up_+$ provides (not necessarily strictly) better updates (w.r.t.\ $\sqsubseteq$) than $\au_+$ and $\au_+$ is monotone, when $\au^+$ produces a winning sequence, so does $\up^+$.

(2) implies (1):
when $\up^+$ produces a win if, and only if, $\ru^+$ produces a colour witness with value $>e$, which according to Lemmas \ref{lem:newup.stale} through \ref{lem:newup.overflow} entails that it has an
even chain whose length is strictly greater than $e$.
The play $\rho$ must, at that point, contain a cycle,
since there must be a vertex with even colour that has been visited twice.
Moreover, the largest priority on this cycle must be even, so this is a winning
cycle for player \emph{even}.
\qed
\end{proof}

\begin{corollary}
For a parity game with $e$ states of even colour, colours $C$, $k=\lfloor \log_2(e) \rfloor$ and $\mathbb C$ the space for concise witnesses of value $\leq e$, length $k+1$ and colours $C$, both
$\mathcal U=(\mathbb C;C;\_,\ldots,\_;\up_+;\won)$ and $\mathcal A=(\mathbb C;C;\_,\ldots,\_;\au_+;\won)$ are separating automata.
\qed
\end{corollary}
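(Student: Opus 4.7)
The plan is to derive this corollary as an essentially immediate consequence of Theorem \ref{theo:fast}, which already equates winning the parity game from $v_0$ with winning the fast basic update game (condition 2) and the fast antagonistic update game (condition 3). The work that remains is bookkeeping: check that $\mathcal U$ and $\mathcal A$ are well-defined deterministic reachability automata over the prescribed statespace, and match the two bullets of the separating-automaton definition to the two directions of Theorem \ref{theo:fast}.

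First I would verify the structural prerequisites. The set $\mathbb C$ is finite (bounded-length sequences over the finite alphabet $\cu$, further cut by the value bound); $\_, \ldots, \_$ is trivially a concise witness for every play prefix, so it is a valid initial state; $\won$ is a sink, since $\up_+(\won,d)=\won$ and $\au_+(\won,d)=\won$ follow from the definitions. Determinism of $\delta$ is immediate from $\up_+$ and $\au_+$ being functions. Thus $\mathcal U$ and $\mathcal A$ are bona fide deterministic reachability automata of the required form.

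Next I would match the two acceptance conditions. Suppose $v_0$ is won by player \emph{even} in the parity game. By Theorem \ref{theo:fast}, he wins the fast basic (resp.\ antagonistic) update game from $v_0$, i.e.\ he has a strategy in the parity game such that every consistent play causes the $\mathcal U$-run (resp.\ $\mathcal A$-run) starting at $\_,\ldots,\_$ to reach $\won$. Conversely, suppose $v_0$ is won by player \emph{odd}. Then even does \emph{not} win the parity game, so by Theorem \ref{theo:fast} he does not win the corresponding update game; by determinacy of that update game (a reachability game over a finite arena), odd has a strategy such that every consistent play induces an automaton run that avoids $\won$. This gives exactly the two bullets in the separating-automaton definition, once for $\mathcal U$ using condition (2) of Theorem \ref{theo:fast} and once for $\mathcal A$ using condition (3).

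The main obstacle I expect is the \emph{positionality} requirement in the definition of a separating automaton, since Theorem \ref{theo:fast} only yields \emph{some} winning strategy in the update game. The standard fix is to observe that the synchronous product of the parity game with the deterministic reachability automaton is a reachability game for even (safety game for odd) on the finite arena $V \times \mathbb C$; both reachability and safety games on finite arenas admit positional winning strategies. Projecting such a positional strategy on $V \times \mathbb C$ back to a strategy on $V$ requires only a mild argument: on any reachable product state $(v,\mathbf b)$ the automaton component $\mathbf b$ is determined by the history of colours, but the decision prescribed by the separating-automaton condition is allowed to depend only on $v$. This is handled either by the standard observation that the $\mathbf b$ component is, by Theorems \ref{theo:concise.antagonistic}/\ref{theo:fast}, irrelevant for winning (a single $v$-indexed choice works uniformly), or by appealing directly to memoryless determinacy of parity games to pick a positional strategy first and then use Theorem \ref{theo:fast} on it. Either route closes the argument and establishes both $\mathcal U$ and $\mathcal A$ as separating automata. \qed
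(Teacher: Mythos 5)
Your proposal is correct and follows essentially the same route as the paper, which states this corollary with no written proof at all, treating it as an immediate reformulation of Theorem \ref{theo:fast} in separating-automaton terms. Your additional bookkeeping (well-definedness of the automata, determinacy of the finite reachability game, and positionality via memoryless determinacy of parity games) simply makes explicit what the paper leaves implicit.
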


To give an intuition to their states, for $\mathcal U$ being in a state $\mathbf b \neq \won$ means that $\mathbf b$ is a colour witness for the play prefix, while $\won$ means that the play prefix contains an even chain of length $>e$, and thus an even cycle.

For $\mathcal A$ being in a state $\mathbf b$ means that there is a state $\mathbf c \sqsupseteq \mathbf b$ with this property.

\subsection{Faster Conversion}
While the method will speed up $\mathcal A=(\mathbb C;C;\_,\ldots,\_;\au_+;\won)$ a little, the difference is easier to see when using $\mathcal U=(\mathbb C;C;\_,\ldots,\_;\up_+;\won)$.

The classic QP algorithms \cite{CJKLS17,JL17,FJKSSW19} have very simple pathological examples.
For example, \cite{JL17} would traverse the complete statespace for a state colour $2$ and a selfloop (or for a state with colour $1$ and a selfloop, depending on whose player's side the algorithm takes).
Similarly, \cite{CJKLS17} would traverse very large parts of its statespace when fed with only even colours.

Using our update rules for colour witnesses, a loop with even colours will always lead to acceptance within $e+1$ steps.

It is possible to make this a bit more robust against the occurrence of odd priorities that are then immediately followed by higher even priorities by returning to $\mathbb W$ as a statespace%
\footnote{When using $\mathbb W$, there would need to be some care taken with odd $o$-witnesses:
while the rules for overwriting lower numbers are as expected, the point to bear in mind that the treatment of odd colours that already occur in a witness (covered by  Lemma \ref{lem:oddup.local}) generalise to `if there is already a lowest position $j$ with $b_j=o$, then just replace all $b_i$ with $i<j$ by $\_$.
This is because, while two even colour witnesses can be merged, two odd colour witnesses cannot, and there would be no means to mark them as different colour witnesses.}.

\section{Statespace}
\label{sec:complex}
In this section, we compare the size of the statespace with both the statespace from the construction of Jurdzinski and Lasic~\cite{JL17}--which comes with the best current bounds-- and the original stataspace from Calude et al.~\cite{CJKLS17}.

We then discuss the effect of the four improvements over the original approach from Calude et al.~\cite{CJKLS17}:
\begin{enumerate}
    \item the restriction of the number of occurrences of odd colours in a witness to once,
    \item not using any colour that is higher than any even colour;
    \item not allowing for odd colours in the rightmost position (i.e. $b_0$);
    \item the removal of the colour $1$; and
    \item moving from length to value restriction.
\end{enumerate}

The first of these improvements is, individually, the most powerful one. Three of the other improvements, (2) -- (4), have already been discussed in this form in \cite{FJKSSW19}.

We will show in Subsection \ref{ssec:equal} that applying \emph{only} improvements (1) and (2) leads to a statespace of exactly the same size as that of Jurdzinski and Lasic~\cite{JL17}.

Consequently, the further improvements, (3) -- (5), lead to a strictly smaller statespace.
The improvement from (3) alone almost halves the statespace, while (4) alone has only a small effect. The effect of rule (5) varies greatly: it is strongest when the bound on the length of an even chain is a power of $2$ ($2^p$ for some $p \in \mathbb N$), where it leads to halving the statespace, and vanishes if it is one less ($2^p-1$ for some $p \in \mathbb N$).

After briefly visiting the statespace from \cite{FJKSSW19}, we then turn to an experimental comparison of the three statespaces of interest, confirming the quantification of the advantage we have obtained over \cite{JL17}.

In this section, we use $\cnt^{size}_{alg}$ for counting the number of state minus one, estimating the number of states except for the winning state (`$\won$'), which all progress measures under consideration have.
The superscript \emph{size} can be $\ell$, saying that only the length of the data structure (or: the $\lceil \log_2 (e+1) \rceil$ for the maximal length $e$ of an even chain) is taken into account; $v$ if the value of witness is taken into account (or: the maximal length $e$ of an even chain) is taken into account, and $\ell, v$ if both are used.
The subscript is either $JL$ when counting the concise progress measures from \cite{JL17}, $O$ when considering the original approach from \cite{CJKLS17}, $'1,2'$ when adding improvements (1) and (2), or blank when considering either improvements (1) through (4) or all improvements. In a closing comparison with the statespace of \cite{FJKSSW19}, we use the subscript $JKSSW$.

\subsection{Concise Progress Measures \cite{JL17}}
While we will not describe the algorithm, but the data structure, which holds a winning state besides the states we describe.
For each even priority, there is a (possibly empty) word over two symbols, say $+$ and $-$, such  that the words concatenated have length at most
$\ell = \lceil \log_2 (e + 1) \rceil$, where $e$ is the number of states with an even priority. That is, $\ell$ is the length of the witness and colour witness from the previous section ($\ell = k+1$).

For \cite{JL17} (i.e. \emph{alg} = $JL$), with $c$ priorities $\{1,\ldots, c\}$ and $n$ states with even priority (not counting the winning state) we have the following counts:

\begin{itemize}
    \item Induction basis, length: we start with the case in which we bound the sum of the lengths of $+$ and $-$ by $0$ or $1$.
    
    When we bound the sum of the lengths by $0$, then there is only one sequence:
    
    $$\cnt^\ell_{JL}(c,0)=1\ ,$$
    
    and when we bound it by $1$, then we get:
    $$\cnt^\ell_{JL}(2c,1) = \cnt^\ell_{JL}(2c+1,1) = 2c+1\ , $$
    
    as there are $c$ positions in which a sequence of length $1$ can occur (one for each even priority in $\{1,\ldots, 2c+1\}$ or $\{1,\ldots, 2c\}$, respectively), and there is one for the case in which all sequences have length $0$.
    
    \item Induction basis, colours:
    when there is only one even colour (i.e. $2$), we have:
    
    $$\cnt^\ell_{JL}(3,l) = \cnt^\ell_{JL}(2,l) = 2^{l+1}-1 \ . $$
    
    These are the binary words of length at most $l$.
    
    \item For all other cases, we define inductively:
    $$\cnt^\ell_{JL}(2c+1,l) = \cnt^\ell_{JL}(2c,l) = \cnt^\ell_{JL}(2c-2,l) + 2\cnt^\ell_{JL}(2c,l-1)\ ,$$
    
    where the first summand refers to the case where the leading sequence (which refers to colour $2c$) is empty. In this case, the length of the remaining sequences is still bound by $l$, but the number of even colours has dropped by one.
    The two summands $\cnt^\ell_{JL}(2c,l-1)$ represent the cases, where the sequence assigned to the highest even priority starts with a $+$ and $-$, respectively.
    Cutting off this leading sign leaves $\cnt^\ell_{JL}(2c,l-1)$ different states.
\end{itemize}

To estimate the number of concise progress measures, $\cnt^\ell_{JL}(c,l)+1$,
we put aside the winning state and the function that maps all even priorities to the empty sequence.

For the remaining states, we first fix a positive length $i\leq l$ of the concatenated words, and then the $j \leq \lfloor c/2 \rfloor$ of the even priorities that have a non-empty word assigned to them.

There are $\Big(\begin{array}{c} i - 1  \\ j  - 1 \end{array}\Big)$ assignments of positive lengths to $j$ positions that add up to $i$.
For each distribution of lengths, there are $2^i$ different assignments to words.
Finally there are $\Big(\begin{array}{c}\lfloor c/2 \rfloor   \\ j \end{array}\Big)$ possibilities to assign $j$ of the $\lfloor c/2 \rfloor$ different even priorities.

This provides an overall statespace of

$$2 + \sum_{i=1}^l \sum_{j=1}^{\min\{{i,\lfloor c/2 \rfloor }\}} 2^i 
\cdot  \Big(\begin{array}{c}\lfloor c/2 \rfloor   \\ j \end{array}\Big) \cdot \Big(\begin{array}{c} i - 1  \\ j  - 1 \end{array}\Big) \ .$$

\subsection{Calude et al.~\cite{CJKLS17}}
We now continue with the statespace of the original quasi-polynomial approach of Calude et al.~\cite{CJKLS17}, hence, \emph{alg} = $O$.
The statespace used in \cite{CJKLS17} is slightly larger than $\mathbb W$, as it only uses the length of the witness as restriction and does not exclude odd values for the rightmost position (`$b_0$') in a witness.

For the precise count of this statepace without the winning state `$\won$', we have the following counts:

\begin{itemize}
    \item Induction basis, length:
    sequences of length $1$ (only containing $b_0$) whose colour values are bounded by $c$, can take $c+1$ values in $\{1,\ldots, c\}$ plus $\_$. We therefore have:
    
    $$\cnt^\ell_O(c,1) = c +1 \ .$$ 

    \item For longer tails of sequences, we define inductively:
    
    $$\cnt_O(c,l+1) = \cnt_O(c,l) + \sum_{i=1}^{c}\cnt_O(i,l)\ .$$
    
    The summands refers to the possible values taken by the leftmost postion (`$b_l$') of the tail ($b_l,b_{l-1},\ldots,b_0$).
    The first summand refers to the leading position being $\_$ (`$b_l=\_$').
    This does not restrict the values the rest of the tail may take any further as the highest colour allowed to appear is still $c$.
    The other summands refer to the value $i$ (`$b_l=i$').
    When the value of the leftmost position is $i\leq c$, then the highest colour that may occur in the remaining positions is $i$.
    \end{itemize}

The size of $|\mathbb W^+|$ of the statespace can be given as:

$$2 + \sum_{i=1}^l \Big(\begin{array}{c} l  \\ i \end{array}\Big)\cdot \Big(\begin{array}{c}i + c - 1  \\ i \end{array}\Big) \ .$$

The `$2$' refers to the winning state and the `empty' sequence that consists only of $\_$ symbols (which is more convenient for us to treat separately), and the sum refers to the states represented by non-empty sequences of length $l = \lceil \log_2 (e + 1) \rceil$, where $e$ is the number of states with an even priority. Note that the estimation given in \cite{CJKLS17} is slightly coarser, and their game definition is slightly different from the normal definition of parity games, but the bound can be taken from \cite{FJKSSW19}.

For the estimation, after fixing the positive $i \leq l$ positions with values different to $\_$, there are $\Big(\begin{array}{c}i + c - 1  \\ i \end{array}\Big)$ different valuations when we have $c$ priorities.
For each $i \leq l$, there are $\Big(\begin{array}{c} l  \\ i \end{array}\Big)$ different choices for the $i$ positions containing some number $d \in C$.
This leads to $ \sum_{i=1}^l \Big(\begin{array}{c} l  \\ i \end{array}\Big)\cdot \Big(\begin{array}{c}i + c - 1  \\ i \end{array}\Big)$ different states that contain $i\leq l$ positions that have a value in $C$.

To obtain a better inroad to outline the differences, we first take a closer look at the $\Big(\begin{array}{c}i + c - 1  \\ i \end{array}\Big)$ different valuations that we may have for $c$ priorities when we have fixed the $i \leq l$ positions that are not marked as $\_$.

For these positions, we can look at the cases where there are $j\leq \min(i,r)$ fixed different priorities. For that case, there are $\Big(\begin{array}{c} i - 1  \\ j  - 1 \end{array}\Big)$ many assignments of these $j$ priorities to the $i$ positions.
Moreover, there are $\Big(\begin{array}{c} c  \\ j \end{array}\Big)$ different options to select $j$ of the available $r$ priorities.
Thus, we get

$$\Big(\begin{array}{c}i + c - 1  \\ i \end{array}\Big) = \sum_{j=1}^{\min\{{i,c}\}} \Big(\begin{array}{c} c  \\ j \end{array}\Big) \cdot \Big(\begin{array}{c} i - 1  \\ j  - 1 \end{array}\Big)$$
different combinations for all number of priorities put together, providing the following size:

$$2 + \sum_{i=1}^l \sum_{j=1}^{\min\{{i,c}\}} \Big(\begin{array}{c} l  \\ i \end{array}\Big)\cdot  \Big(\begin{array}{c} c  \\ j \end{array}\Big) \cdot \Big(\begin{array}{c} i - 1  \\ j  - 1 \end{array}\Big) \ .$$

\subsection{Improvements}
We now discuss the differences obtained when moving from $\mathbb W^+$ to $\mathbb C$ by looking at the effect of the three optimisations we have introduced. These are:
\begin{enumerate}
    \item the restriction of the number of occurrences of odd colours in a witness to once,
    \item not using any colour that is higher than any even colour;
    \item not allowing for odd colours in the rightmost position (`$b_0$');
    \item the removal of the colour $1$; and
    \item moving from length to value restriction.
\end{enumerate}

\subsubsection{(1) and (2) Restricted Occurrence of Odd Colours.}
\label{ssec:equal}
Restricting the occurrence of odd colours to once, together with the optimization of not using any colour that is higher than any even colour, leads to a situation where the highest colour allowed in any position is even.
To see this, we observe that the banning of a potential odd colour higher then any even colour guarantees this initially, where the highest colour allowed is the highest even colour.

When an odd colour $o$ is used in the witness (`$b_l = o$'), then the highest colour allowed to its right is $o-1$, whereas when an even colour $e$ is used in the witness (`$b_l = e$'), then the highest colour allowed to its right is $e$.

We therefore only have to define our improved counting function for even colours:

\begin{itemize}
    \item Induction basis, length:
    apart from using only even bounds, the base case remains the same:
    
    $$\cnt^\ell_{1,2}(2c,1) = 2c +1 \ .$$ 

    \item For longer tails of sequences, we define inductively:
    
    $$\cnt^\ell_{1,2}(2c,l+1) = 1 + 2\sum_{i=1}^{c}\cnt^\ell_{1,2}(2i,l)\ .$$
    
    The summands refers to the possible values taken by the leftmost postion (`$b_l$') of the tail ($b_l,b_{l-1},\ldots,b_0$).
    
    The first summand refers to the leading position being $1$ (`$b_l=1$'). If this is the case, then all entries to its right must be \emph{strictly smaller} than $1$ (which is not possible) or $\_$ -- consequently, they must all be $\_$, which just leaves one such tail.
    
    The other summands refer to the leading position taking the value $2i$ or $2i + 1$ when $i< c$ (`$b_l=2i$' or `$b_l=2i+1$'), in either case, the maximal value of the colours occurring in the remaining tale is $2i$.
    
    The final two summands (for $i=c$) refer to the leading position taking the value $2c$ or $\_$ (`$b_l=2c$' or `$b_l=\_$').
    In both cases, the maximal value of the colours occurring in the remaining tale is $2c$.
    \end{itemize}

While the representation is different, it is easy to see that $\cnt^\ell_{1,2}(2c,l) = \cnt^\ell_{JL}(2c,l)  $ holds.

To see this, we first observe that $\cnt^\ell_{JL}(2,l+1) = 1 + 2\cnt^\ell_{JL}(2,l)$ holds, and then by induction over $c$ that:

$$\cnt^\ell_{JL}(2,l+1) = 1 + 2\sum_{i=1}^c\cnt^\ell_{JL}(2i,l)\ .$$ 

Given that we also have $\cnt^\ell_{1,2}(2c,1) = \cnt^\ell_{JL}(2c,1)$, we get the claim, because $\cnt^\ell_{JL}(2c,l)$ cannot be derived in the same way as $\cnt^\ell_{1,2}(2c,l)$.

\subsubsection{(3) and (4) Removing odd colours from the rightmost position and $1$s.}
Removing odd colours from the rightmost positions only changes the base case, while banning $1$ from the other positions merely removes the `$1+$' part from the inductive definition. This leaves:

\begin{itemize}
    \item Induction basis, length:
     
    $$\cnt^\ell(2c,1) = c +1 \ .$$ 

    \item For longer tails of sequences, we define inductively:
    
    $$\cnt^\ell(2c,l+1) = 2\sum_{i=1}^{c}\cnt^\ell(2i,l)\ .$$
\end{itemize}

When evaluating the term $\cnt^\ell$, the reduction from $2c + 1$ to $c+1$ is halving the value (rounded up) at the leaf of each call tree, which provides more than the removing of `$=1$' in each node of the call tree. Together, they \textbf{broadly halve the value}.

\subsubsection{(5) Taking the value into account.}
We start with using both the length and the value, and then remove the length in next step to get a more concise representation, but we note that, for a given length $l$, the value $v$ allowed always satisfies $v< 2^l$.

First, we get another induction base, one by value:

\begin{itemize}
    \item Induction basis, value:
     
    $$\cnt^{\ell,v}(2c,l,0) = 1 \ .$$ 

Regardless of the remaining length, if the \emph{value} of the tail is bounded by (and thus needs to be) $0$, then it can only consist of $\_$ signs.

    \item Induction basis, length:
     
    $$\cnt^{\ell,v}(2c,1,1) = c +1 \ .$$ 

    \item For longer tails of sequences and positive values, we distinguish a number of cases. The first case is that $v < 2^{l}$. Then we have
    
    $$\cnt^{\ell,v}(2c,l+1,v) = \cnt^{\ell,v}(2c,l,v)\ .$$
    
    This is simply because filling the position $l+1$ with any number, even or odd, would exceed the value budget.
    
    This leaves the case $v \geq 2^{l}$, that is:
\[
\begin{array}{ccl}
\cnt^{\ell,v}(2c,l+1,v) =& &
    \sum_{i=1}^{c}\cnt^{\ell,v}(2i,l,v-2^l)\\[5pt]
    &+&  \sum_{i=1}^{c}\cnt^{\ell,v}(2i,l,2^l-1) \ .
    \end{array}
\]
This is because, when filling position $l$ with an even number, it takes $2^l$ from the budget of the value, leaving an remaining budget of $v-2^l$.

When filling this position with an odd number, while the value would be increased by $2^l$, this is within the value budget.
Moreover, if this position is still relevant to the value, then the positions to its right no longer add to the value of the sequence, as the leftmost odd position would be the last to be considered.

We therefore set the value for the remaining tail to the right to be the maximal value that can be obtained by this tail, which is $2^l-1$; this is a rendering of saying that for the tail the values are not constrained.
\end{itemize}

The effect of adding the value can vary greatly. It is larger when the number of positions with even colour is a power of $2$, say $2^l$, and it has no effect at all if it is $2^{2}-1$.
In the former case, if the initial position is even, then all other positions need to be $\_$. Generally, we have

\[
\begin{array}{rl}
\cnt^{\ell,v}(2c,l,2^l-1) =& \cnt^{\ell}(2c,l) \qquad \mbox{ and}\\[5pt]
\cnt^{\ell,v}(2c,l,2^{l-1}) =& \cnt^{\ell}(2c,l)/2 + c 
    \end{array}
\]
for all $l > 1$.

\begin{quote}
Taking the value into account therefore broadly \textbf{halves the statespace when $e$ is a power of $2$}, and \textbf{has no effect when $e$ is a predecessor of a power of $2$}, and falls from $2^{l-1}$ to $2^l-1$ for all $l>1$.     
\end{quote} 
\medskip

Looking at the definition of $\cnt^{\ell,v}$, it is easy to see that an explicit reference to the length can be replaced by a reference to the next relevant length, $\lfloor \log_2 v \rfloor$. This provides:

\[
\begin{array}{ccl}
\cnt^{v}(2c,0) =& &1\ , \\[5pt]
\cnt^{v}(2c,1) =& &c+1\ ,\mbox{ and} \\[5pt]
\cnt^{v}(2c,v) =& &
    \sum_{i=1}^{c}\cnt^{v}(2i,v-2^{\lfloor \log_2 v \rfloor})\\[5pt]
    &+&  \sum_{i=1}^{c}\cnt^{v}(2i,2^{\lfloor \log_2 v \rfloor}-1) \mbox{ otherwise.}
    \end{array}
\]
    
\subsection{Comparison with the statespace of \texorpdfstring{\cite{FJKSSW19}}{}}
While improvement (1) is the most powerful of the optimizations, the improvements (2) -- (4) were present in \cite{FJKSSW19}, where the algorithm makes use of a value function, namely $\val'(\mathbf b) = \sum\limits_{i\in \even(\mathbf b)} 2^i$.
It is therefore interesting to provide a count function for \cite{FJKSSW19}. We use the subscript $JKSSW$, and only use the count that uses both length and value.

We get the following state counts:

\[
\begin{array}{rcl}
\cnt^{\ell,v}_{JKSSW}(c,1,0) =& & 1 \\[5pt] 
\cnt^{\ell,v}_{JKSSW}(c,1,1) =& & \lfloor c/2 \rfloor + 1 \\[5pt] 
\mbox{if }v < 2^l:
\cnt^{\ell,v}_{JKSSW}(c,l+1,v) =& & \cnt^{\ell,v}_{JKSSW}(c,l,v) \\[5pt] 
 & + &
 \sum_{i=2}^{\lceil c/2 \rceil}\cnt^{\ell,v}_{JKSSW}(2i-1,l,v)\\[5pt] 
\mbox{if }v \geq 2^l:
\cnt^{\ell,v}_{JKSSW}(c,l+1,v) =& &
    \cnt^{\ell,v}_{JKSSW}(c,l,2^l-1)\\[5pt] 
    &+&  \sum_{i=1}^{\lfloor c/2 \rfloor}\cnt^{\ell,v}_{JKSSW}(2i,l,v-2^l) \\[5pt] 
    &+&  \sum_{i=2}^{\lceil c/2 \rceil}\cnt^{\ell,v}_{JKSSW}(2i-1,l,2^l-1) \ .
    \end{array}
\]

To explain the difference to $\cnt^{\ell,v}$, one major difference is that the highest colour allowed in a position can be odd. The other is that positions with odd colour do not contribute to the weight, which allows for adding positions with odd colour the remaining budget is lower than $2^l$.

Thus, the call tree for the calculation of $\cnt^{\ell,v}_{JKSSW}$ has $\lceil c/2 \rceil$ successors where $v<2^l$ while the call tree for $\cnt^{\ell,v}$ has just one.
For $v \geq 2^l$, the call tree has the same number of successors (for even $c$) or just one additional successor (for odd $c$), but the parameter falls slower.

\subsection{Statespace comparison}
\label{sec:compare}
In this subsection we provide a graphical representation of the statespace size for the three algorithms: Calude et al.\cite{CJKLS17}, Jurdzinski and Lasic \cite{JL17}, and the improvement described in this article.
The size of the statespace on which an algorithm works does not represent how good the algorithm performs in practice. Indeed, in the context of parity games, there are quasi-polynomial time algorithms that behaves like brute-force approaches. Therefore, they always require quasi-polynomial many steps to compute the solution, while most of the exponential time algorithms, instead, almost visit a polynomial fraction of their statespace. The first improvement we described does not affect the performance of the algorithm, since both the original and the improved algorithm requires the same number of steps to solve a game, but the latter works on a reduced statespace.
To measure how big is the cut we consider in Figure \ref{fig:stsp} games with a fixed number of colours and in Figure \ref{fig:perB} games with a linear number of colours in the size of the game. The games of Figure \ref{fig:stsp} range from $2^3$ to $2^{15}$ positions $n$. Therefore, the length of the measure, that is logarithmic in $n$, constantly increases, while the colours are fixed to value 10. As a consequence, the ratio of colours with respect to $n$ range from $80\%$ to $0.02\%$.
As expected, the cut with the original algorithm significantly increases for games that are not dense of colours as the lines tend to diverge on a logarithmic scale. The ratio between Jurdzinski and Lasic approach ($JL$) and the new improvement, instead, converge to a cut of $73\%$ of the statespace.
The games of Figure \ref{fig:perB}, instead range from $2^8$ to $2^9$ positions $n$, so that the length of the measure is fixed, while the number colours constantly grows from 26 to 50. As a consequence, we have that the ratio of colours with respect to $n$ is fixed to $10\%$. In this case, the scale are linear and, even if the improved statespace is always smaller than the other two, the cut tends to shrink.
\begin{figure}
  \begin{tikzpicture}[scale = 1.35, every node/.style = {scale = 0.9}]
      \begin{axis}
        [
          ymin = 0.2, ymax = 3460000, xmin = 8, xmax = 32768,
          xmode=log,
          ymode=log,
          log basis x={2},
          extra y ticks = 5,
          extra y tick labels = {10},
          x axis line style = -, y axis line style = -,
          ymajorgrids = true,
          xlabel = Game positions, ylabel = Number of states / $10^{3}$,
          legend pos = north west,
          legend entries = {Old, JL, New}
        ]
    
        \addplot [green!50!black, solid, line width = 1pt, mark = o, mark
          options = solid, mark size = 2] table [x index = 1, y index = 4]
          {fix10exp3};
    
        \addplot [black, solid, line width = 0.75pt, mark = star, mark
          options = solid, mark size = 2.5] table [x index = 1, y index = 5]
          {fix10exp3};
          
        \addplot [red!50!black, solid, line width = 0.75pt, mark = square, mark
          options = solid, mark size = 2.5] table [x index = 1, y index = 6]
          {fix10exp3};
    
      \end{axis}
  \end{tikzpicture}
\vspace{-3em}
\end{figure}
\begin{table}
\hspace{6em}
    \scalebox{1.50}[1.30]
    {
    \begin{tabular}{|r|c||r|r|r|}
        \hline \multicolumn{2}{|c||}{\ } & \multicolumn{3}{c|}{\ }\\[-0.90em]
        \multicolumn{2}{|c||}{Game data}  & \multicolumn{3}{c|}{Statespace size / $10^{3}$}\\
        \hline
        \multicolumn{1}{|c|}{Nodes} & \multicolumn{1}{c||}{Colours} & \multicolumn{1}{c|}{Old} & \multicolumn{1}{c|}{JL} & \multicolumn{1}{c|}{New}\\
        \hline
        8 & 8 & 2 & 1 & $>$1\\
        16 & 10 & 8 & 5 & 1\\
        32 & 10 & 33 & 18 & 5\\
        64 & 10 & 122 & 61 & 17\\
        128 & 10 & 432 & 187 & 52\\
        256 & 10 & 1462 & 553 & 154\\
        512 & 10 & 4780 & 1579 & 439\\
        1024 & 10 & 15157 & 4374 & 1211\\
        2048 & 10 & 46813 & 11829 & 3261\\
        4096 & 10 & 141264 & 31326 & 8601\\
        8192 & 10 & 417577 & 81461 & 22282\\
        16384 & 10 & 1211700 & 208470 & 56819\\
        32768 & 10 & 3458200 & 525991 & 142884\\
        \hline
    \end{tabular}
    }
\vspace{1em}
\caption{\label{fig:stsp} Size of the statespace for games with a fixed number of colours on a logarithmic scale.}
\end{table}

\begin{figure}
  \begin{tikzpicture}[scale = 1.45, every node/.style = {scale = 0.9}]
      \begin{axis}
        [
          ymin = 190, ymax = 37000, xmin = 260, xmax = 500,
          x axis line style = -, y axis line style = -,
          ymajorgrids = true,
          xlabel = Game positions, ylabel = Number of states / $10^{6}$,
          legend pos = north west,
          legend entries = {Old, JL, New}
        ]

        \addplot [green!50!black, solid, line width = 1pt, mark = o, mark
          options = solid, mark size = 2] table [x index = 0, y index = 3]
          {linear10lin2};

        \addplot [black, solid, line width = 0.75pt, mark = star, mark
          options = solid, mark size = 2.5] table [x index = 0, y index = 4]
          {linear10lin2};
          
        \addplot [red!50!black, solid, line width = 0.75pt, mark = square, mark
          options = solid, mark size = 2.5] table [x index = 0, y index = 5]
          {linear10lin2};

      \end{axis}
    \end{tikzpicture}
\end{figure}

\begin{table}
    \scalebox{1.50}[1.30]
    {
\hspace{6em}
    \begin{tabular}{|c|c||r|r|r|}
        \hline \multicolumn{2}{|c||}{\ } & \multicolumn{3}{c|}{\ }\\[-0.90em]
        \multicolumn{2}{|c||}{Game data}  & \multicolumn{3}{c|}{Statespace size / $10^{6}$}\\
        \hline
        \multicolumn{1}{|c|}{Nodes} & \multicolumn{1}{c||}{Colours} & \multicolumn{1}{c|}{Old} & \multicolumn{1}{c|}{JL} & \multicolumn{1}{c|}{New}\\
        \hline
        260 & 26 & 381 & 190 & 53\\
        280 & 28 & 622 & 318 & 90\\
        300 & 30 & 987 & 518 & 148\\
        320 & 32 & 11531 & 820 & 251\\
        340 & 34 & 2323 & 1271 & 389\\
        360 & 36 & 3456 & 1928 & 608\\
        380 & 38 & 5054 & 2870 & 926\\
        400 & 40 & 7271 & 4201 & 1759\\
        420 & 42 & 10309 & 6053 & 2584\\
        440 & 44 & 14420 & 8596 & 3724\\
        460 & 46 & 19919 & 12047 & 5838\\
        480 & 48 & 27199 & 16675 & 8625\\
        500 & 50 & 36742 & 22818 & 12200\\
        \hline
    \end{tabular}
    }
\vspace{1em}
\caption{\label{fig:perB} Size of the statespace for games with a linear number of colours on a linear scale.}
\end{table}

\section{Discussion}
\label{sec:discuss}
We have introduced three technical improvements over the progress measures used in the original quasipolynomial approach by Calude et al.~\cite{CJKLS17} and its improvements by Fearnley et al.~\cite{FJKSSW19}.
The first two reduce the statespace.

The more powerful of the two is a simple limitation of the occurrences of odd colours in a witness to one.
Where the highest colour is even, this alone reduces the size of the statespace of Calude et al.'s approach to the currently smallest one of Jurdzinski and Lasic \cite{JL17}.
Where the highest colour is odd, we obtain the same by borrowing the simple observation that this highest colour does not need to be used from \cite{FJKSSW19}.

The second new means to reduce the statespace is the only use witnesses that refer to even chains of plausible size, namely those that do not contain more dominating even states then the game has to offer.
A similar idea had been explored in \cite{FJKSSW19}, but our construction is more powerful in reducing the size of the statespace.
The effect of this step ranges from none (where the number of states with even colour is the predecessor of a power of $2$ ($2^\ell-1$ for some $\ell \in \mathbb N$), then rises steeply to a factor of $2$ for a power of $2$ ($2^\ell$), and then slowly falls again, until it vanishes at the next predecessor of a power of $2$.

These improvements work well with the other improvements from \cite{FJKSSW19}, namely not using the colour $1$ and disallowing odd values for the rightmost position (`$b_0$') in a witness.
These improvements broadly halve the statespace, leading to a statespace reduction that broadly oscilates between $2$ and $4$ when compared to the previously leading approach.

The second improvement we have introduced is a re-definition of the semantics of witnesses, moving from the classic \emph{witnesses} to \emph{colour witnesses}.
While it does not lead to a difference in the size of the statespace, it does accelerate its traversal, especially for the `standard' update rule that does not extend to value iteration; in particular it gets rid of the most trivial kind of silly hard examples, such as cliques of states of player odd that all have even colour.

While it clearly accelerates the analysis, it is not as easy as for the statespace reduction to quantify this advantage.

\section{\bf Acknowledgments.}
\includegraphics[height=8pt]{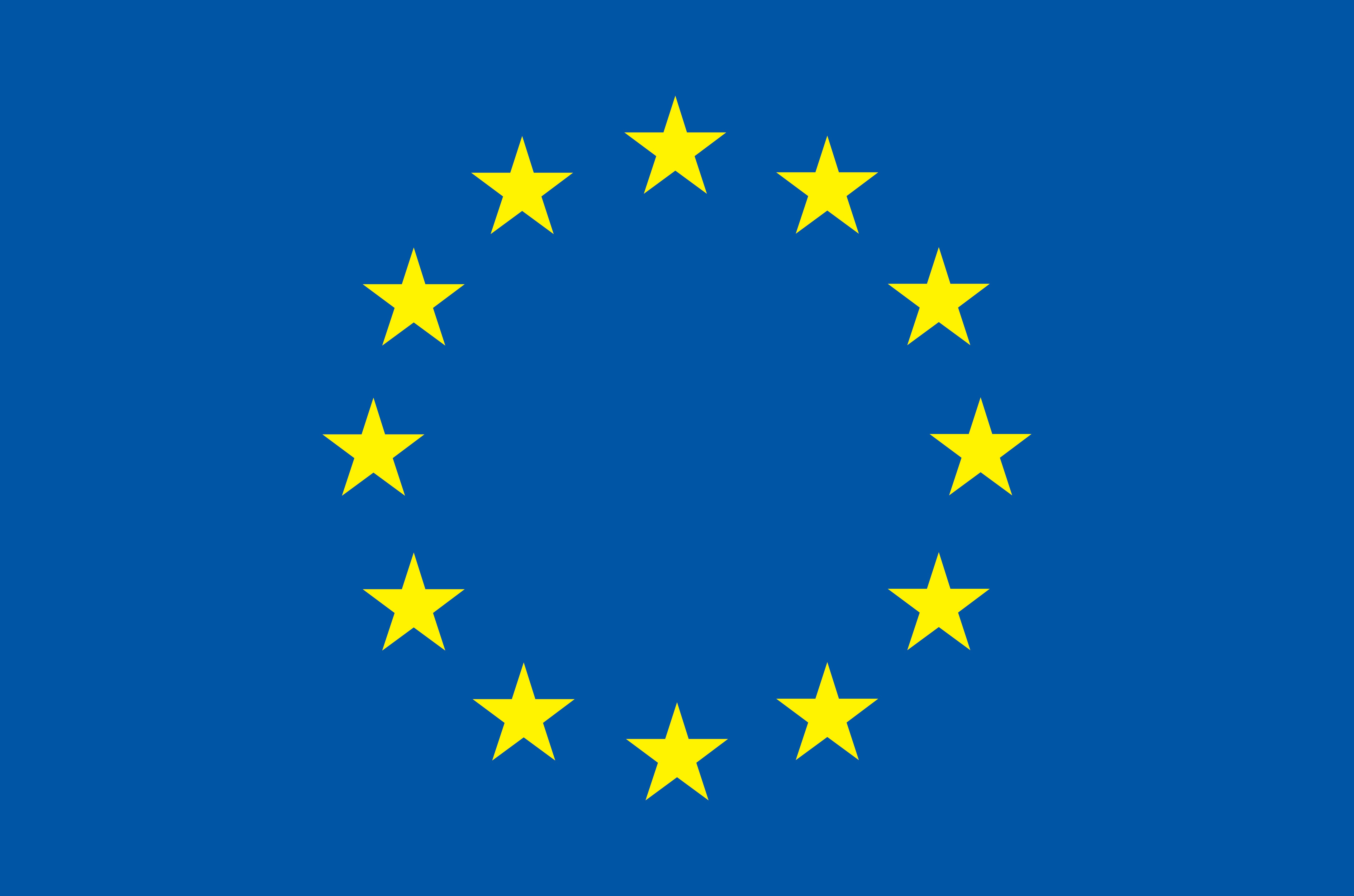} This project has received funding from the European Union’s Horizon 2020 research and innovation programme under the Marie Sk\l odowska-Curie grant agreement No 101032464.

\bibliographystyle{abbrv}
\bibliography{bib}

\end{document}